\newtheorem{propx}{Proposition}
\newtheorem{lemma}{Corollary}
\begin{document}
\title{{Symmetry and} block structure of the Liouvillian superoperator in 
partial secular approximation}
\author{Marco Cattaneo}
\affiliation{Instituto de F\'{i}sica Interdisciplinar y Sistemas Complejos IFISC 
(CSIC-UIB),
Campus Universitat Illes Balears, E-07122 Palma de Mallorca, Spain}
\affiliation{QTF Centre of Excellence, Turku Centre for Quantum Physics, 
Department of
Physics and Astronomy, University of Turku, FI-20014 Turun Yliopisto, Finland}
\author{Gian Luca Giorgi}
\affiliation{Instituto de F\'{i}sica Interdisciplinar y Sistemas Complejos IFISC 
(CSIC-UIB),
Campus Universitat Illes Balears, E-07122 Palma de Mallorca, Spain}
\author{Sabrina Maniscalco}
\affiliation{QTF Centre of Excellence, Turku Centre for Quantum Physics, 
Department of
Physics and Astronomy, University of Turku, FI-20014 Turun Yliopisto, Finland}
\affiliation{QTF Centre of Excellence, Department of Applied Physics, School of 
Science, Aalto University, FI-00076 Aalto, Finland}
\author{Roberta Zambrini}
\affiliation{Instituto de F\'{i}sica Interdisciplinar y Sistemas Complejos IFISC 
(CSIC-UIB),
Campus Universitat Illes Balears, E-07122 Palma de Mallorca, Spain}
\date{\today}
\begin{abstract}
We address the structure of the Liouvillian superoperator for a broad class of 
bosonic and fermionic {Markovian open} systems interacting with {stationary} 
environments. We show that the {accurate} application of
the partial secular approximation in the derivation of the Bloch-Redfield 
master equation naturally induces a symmetry on the superoperator level, which 
may greatly reduce the complexity of the master equation by decomposing the 
Liouvillian superoperator into independent blocks. Moreover, we prove that, 
{if the steady state of the system is unique,}
{one} single block contains all the information about {it}, and that this 
imposes a constraint on the possible steady-state coherences of the 
{unique} state, ruling out some of them. To provide some examples, we show 
how the symmetry appears for two coupled spins interacting with separate baths, 
as well as for two harmonic oscillators immersed in a common environment. In 
both cases the standard derivation and solution of the master equation {is} 
simplified, as well as the search for the steady state. {The 
block-diagonalization may not appear} when a local master equation is chosen.
\end{abstract}
\maketitle
\section{Introduction}
\label{sec:intro}
Open quantum systems are nowadays a well-established {framework} whose 
{theoretical}
aspects have been investigated in depth 
\cite{BreuerPetruccione,Weiss,Rivas2012}. 
{An important branch is {represented by} Markovian open systems 
\cite{Rivas2010a,RevModPhys.88.021002} and quantum dynamical 
semigroups \cite{Alicki2007}.
In particular, the {generator} of a quantum dynamical semigroup is a 
time-independent} 
\textit{Liouvillian superoperator} $\mathcal{L}$, such that if $\rho_S(0)$ is 
the initial state of the system, the state at time $t$ is given by 
$\rho_S(t)=\exp(\mathcal{L} t)[\rho_S(0)]$. 
The solution of the master equation {providing the dynamics} of the system 
relies on finding the Liouvillian $\mathcal{L}$. 

{The evolution of a dynamical semigroup is described by a master equation in} 
the so-called Gorini–Kossakowski–Sudarshan–Lindblad (GKLS) form \cite{Gorini1976a,Lindblad1976,Chruscinski2017a}. 
This form has been extensively studied {during} the recent years 
\cite{Baumgartner2007,Baumgartner2008,Prosen2008,Fagnola2008,Prosen2010,
Prosen2010b,baumgartner2012structures,PhysRevX.6.041031,PhysRevE.95.042137,
sa2019spectral}, 
with a particular attention on the steady state structure, given the importance 
of, for instance, steady-state coherences in quantum thermodynamics 
\cite{PhysRevLett.121.070401,perez2018endurance,PhysRevA.99.042320} or of 
information-preserving steady states \cite{PhysRevA.82.062306}. The form of the steady state is also crucial to understand the process of quantum thermalization \cite{Ostilli2017}.
The investigation of the role of symmetry in the semigroup evolution has been 
very active as well 
\cite{Buca2012,PhysRevB.90.125138,Albert2014,PhysRevA.94.052129,
PhysRevA.98.063815,PhysRevA.97.052106,manzano2018harnessing,SanchezMunoz2019,Styliaris2019}. {Given the 
simplicity of the GKLS master 
equation}, a thornier issue has been to characterize which microscopic physical 
models {of systems and environments lead to a reduced system} evolution 
described 
by {this} master equation. In 1965 Redfield derived a 
Markovian master equation by assuming weak coupling between system and 
environment \textit{and} making some considerations about the relevant 
timescales of the evolution 
\cite{Redfield1965}. This derivation and the subsequent Bloch-Redfield master 
equation are still commonly employed nowadays 
\cite{BreuerPetruccione,Weiss,jeske2015bloch}. 
A more formal derivation has been provided by Davies 
\cite{Davies1974,Davies1976}, showing that the semigroup evolution is perfectly 
recovered when the coupling between 
system and environment is infinitesimally small. {In some situations, e.g. in 
the case of two slightly-detuned spins \cite{Benatti2010,Cattaneo2019}, the Davies' limit 
cannot be 
performed, since it corresponds} to applying a ``full secular approximation'' 
{removing all the oscillating terms in the 
{interaction picture} dynamics without discriminating which of them are fast and 
which are slow, instead of a {more accurate} partial secular approximation. The 
latter} 
was implicitly suggested by 
Redfield himself \cite{Redfield1965}, and an extensive study about it has been 
performed in the very recent past 
\cite{jeske2015bloch,Cresser2017a,Farina2019,Hartmann2020,Cattaneo2019,mccauley2019completely}, 
in particular showing that applying an accurate partial secular approximation 
to the 
microscopic derivation of the master equation allows one to recover the GKLS 
form \cite{Cresser2017a,Farina2019}.

In this work we show how a symmetry on the superoperator level arises due to the 
partial secular approximation. Our discussion is valid for a broad class of 
systems, 
{that can be recast as $M$} non-interacting fermionic or bosonic modes weakly 
coupled to {stationary} Markovian environments. 
The symmetry consists in the invariance of the Liouvillian superoperator under 
the action of the total-number-of-particles superoperator. 
We stress that the symmetry is on the superoperator level, i.e. it is not a 
symmetry of the system Hamiltonian, but of the {full} master equation of the 
open system. 
Following the formalism discussed in Ref.~\cite{Albert2014}, we can exploit it 
to block-diagonalize the Liouvillian superoperator and greatly reduce 
the complexity of the master equation. Complexity reduction of abstract GKLS 
master equations in fermionic or bosonic systems has been {also} 
addressed in 
some extensive works by Prosen et al. \cite{Prosen2008,Prosen2010,Prosen2010b}, {while Torres exploited a symmetry of the Hamiltonian to find the solution of master equations without gain \cite{Torres2014}.}
{Once having exploited the symmetry to obtain the block-diagonalization}, we 
observe that,
{if the steady state of the system is unique}, {one} single block contains 
all the information about {it}. 
This not only helps to find it, but also imposes a constraint on the 
{corresponding} steady-state coherences. Curiously, {the symmetry 
arises} only when considering a global master equation, 
while it may not be valid anymore when using a local one 
\cite{Gonzalez2017,Hofer2017,Cattaneo2019}.

We review the derivation of the Bloch-Redfield master equation and subsequent 
partial secular approximation in Sec.~\ref{sec:prelim}, as well as the theory of 
symmetries and conserved quantities in Lindblad master equations. 
Sec.~\ref{sec:block} is devoted to the discussion of the symmetry on the 
superoperator level and to the block-diagonalization of the Liouvillian. In 
particular, Sec.~\ref{sec:requirements} discusses the class of system for which 
our analysis is valid, while Sec.~\ref{sec:symRes} presents the main result and Sec.~\ref{sec:consequences}
its consequences. We provide some illustrative examples of the action of the 
symmetry in Sec.~\ref{sec:examples}, distinguishing between fermionic and 
bosonic scenarios. Finally, we conclude in Sec.~\ref{sec:conclusions} with a 
discussion about our results.

\section{Formal framework} 
\label{sec:prelim}

\subsection{Markovian master equations with partial secular approximation}
\label{sec:partial}
Let us consider an open quantum system $S$ with associated Hilbert space 
$\mathsf{H_S}$ of dimension $N$, described at time $t$ by the $N\times N$ 
density 
matrix $\rho_S(t)$. $S$ is coupled to an external environment $E$ through the 
interaction Hamiltonian $\hat{H}_I$, and 
{throughout the work we restrict ourselves to stationary environments. The} 
{full system-bath} Hamiltonian can be written as:
\begin{equation}
\label{eqn:Hamiltonian}
\begin{split}
\hat{H}=&\hat{H}_S+\hat{H}_E+\hat{H}_I\\
=&\hat{H}_S+\hat{H}_E+\mu\sum_\alpha \hat{A}_\alpha\otimes \hat{B}_\alpha,
\end{split}
\end{equation}
wherer $\hat{H}_S$ is the free Hamiltonian of the system, $\hat{H}_E$ is the 
free Hamiltonian of the environment, $\hat{A}_\alpha$ are system 
operators while $\hat{B}_\alpha$ are bath operators. $\mu$ is a coupling 
constant with units of energy, and in the weak-coupling limit considered here 
we 
assume $\mu$ far smaller than the other characteristic energies of the system. 
We set $\hbar=1$, so that the units of measure of time are 
$[time]=[energy]^{-1}$. 

We term $\ket{e_n}$ the eigenvectors of the free Hamiltonian of the system, 
which may be degenerate as well, such that $\hat{H}_S=\sum_n \epsilon_n 
\ket{e_n}\bra{e_n}$. 
The \textit{jump operators} of the system are defined as 
\cite{BreuerPetruccione}
\begin{equation}
\label{eqn:jumpOp}
\hat{A}_\alpha(\omega)=\sum_{\epsilon_m-\epsilon_n=\omega}\ket{e_n}\bra{e_n}
\!\hat{A}_\alpha\!\ket{e_m}\bra{e_m}.
\end{equation}

We assume that the open system $S$ follows a Markovian, non-unitary evolution 
due to the coupling to the {stationary} environment $E$. {The master equation 
describing a time-independent dynamical semigroup is written as:}
\begin{equation}
\label{eqn:masterEqLiouvillian}
\frac{d}{dt}\rho_S(t)=\mathcal{L}[\rho_S(t)],
\end{equation} 
where $\mathcal{L}$ is the \textit{Liouvillian superoperator} acting on the 
$N^2$-dimensional Hilbert space $\mathsf{L}$ of the linear operators on 
$\mathsf{H}_S$, called Liouville space \cite{Albert2014}, which contains the 
convex subset of the density matrices. {In particular, for the Bloch-Redfield 
master 
equation in partial secular approximation {(PSA)} \cite{BreuerPetruccione,Weiss,Cattaneo2019}:}
\begin{equation}
\label{eqn:Liouvillian}
\mathcal{L}=-i[\hat{H}_S+\hat{H}_{LS},\,\cdot\,]+\mathcal{D}[\,\cdot\,],
\end{equation}
where $\hat{H}_{LS}$ is the 
\textit{Lamb-shift Hamiltonian} given by:
\begin{equation}
\label{eqn:lambShift}
\hat{H}_{LS}=\sum_{\alpha,\beta}\sum_{(\omega,\omega')\in\mathsf{PSA}} 
S_{\alpha\beta}(\omega,\omega')\hat{A}_\alpha^\dagger(\omega') 
\hat{A}_\beta(\omega),
\end{equation}
while the \textit{dissipator} reads
\begin{equation}
\label{eqn:dissipator}
\begin{split}
\mathcal{D}[\rho_S]=&\sum_{\alpha,\beta}\sum_{(\omega,\omega')\in\mathsf{PSA}} 
\gamma_{\alpha\beta}(\omega,\omega')\Big(\hat{A}_\beta(\omega)\rho_S 
\hat{A}_\alpha^\dagger(\omega')\\
&-\frac{1}{2}\{\hat{A}_\alpha^\dagger(\omega')\hat{A}_\beta(\omega),\rho_S\}
\Big).
\end{split}
\end{equation}
$S_{\alpha\beta}(\omega,\omega')$ and $\gamma_{\alpha\beta}(\omega,\omega')$ are 
functions of the autocorrelation functions of the bath operators 
$B_\alpha$\footnote{We refer the reader to Ref.~\cite{Cattaneo2019} for their 
precise form.}. The PSA removes all the terms in the summation with frequencies 
$\omega$ and $\omega'$ such that
\begin{equation}
\label{eqn:partialSecular}
\exists\, t^*\textnormal{ such that } \abs{\omega-\omega'}^{-1}\ll t^* \ll 
\tau_R,
\end{equation}
where $\tau_R$ is the relaxation time of the system, i.e. the time in which 
$\rho_S$ approaches the dynamical equilibrium 
\cite{BreuerPetruccione,Cattaneo2019}. {We can express Eq.~\eqref{eqn:partialSecular} as $\omega-\omega'\neq \mathcal{O}_{t^*}(\tau_R^{-1})$, where for convenience we introduce the notation $\mathcal{O}_{t^*}$, defined as:
\begin{equation}
\label{eqn:ordering}
x=\mathcal{O}_{t^*}(y)\textnormal{ if }\nexists\, t^*\textnormal{ such that } x^{-1}\ll t^* \ll y^{-1}.
\end{equation}}
{In the weak-coupling limit considered here we have}
\begin{equation}
\label{eqn:tauR}
\tau_R=O(\mu^{-2}),
\end{equation}
being the master equation of the second order in $\mu$ \cite{Rivas2012}. 
{Appendix~\ref{sec:blochLin} discusses {why} Eq.~\eqref{eqn:masterEqLiouvillian} with Liouvillian in Eq.~\eqref{eqn:Liouvillian} can be recast in the GKLS form:
\begin{equation}
\label{eqn:LindbladForm}
\begin{split}
\mathcal{L}[\rho_S(t)]=&-i[\hat{H'},\rho_S(t)]\\
&+\sum_{l=1}^{N^2-1} \hat{F}_l \rho_S(t) \hat{F}_l^\dagger -\frac{1}{2}\{\hat{F}_l^\dagger \hat{F}_l, \rho_S(t)\},
\end{split}
\end{equation}
where $\hat{H'}=\hat{H'}^\dagger$ is the effective Hamiltonian including the Lamb shift, and $\{\hat{F}_l\}_{l=1}^{N^2-1}$ are the \textit{Lindblad operators} \cite{BreuerPetruccione}.}

From now on, we will use calligraphic letters (such as $\mathcal{L}$) to 
indicate superoperators \textit{acting on} $\mathsf{L}$, while we will use 
capital letters, with 
hats {when needed to avoid confusion}, (such as $\hat{H}$) for operators 
\textit{living in} $\mathsf{L}$, which for instance may act on the Hilbert space 
of the system 
$\mathsf{H_S}$. The density matrices $\rho_S$ are elements of $\mathsf{L}$ as 
well. Appendix~\ref{sec:superoperatorsSpace} discusses the language of 
superoperators in more 
detail.

\subsection{Symmetries and conserved quantities in the Lindblad formalism}
\label{sec:symmetries}
In this section we {introduce the concepts of symmetries and conserved 
quantities in the Lindblad formalism following the recent work by Albert and 
Jiang \cite{Albert2014}}. 
{Let us} assume that the Lindblad evolution of an open system $S$ is described by the Liouvillian 
superoperator 
$\mathcal{L}$ as discussed in 
Sec.~\ref{sec:partial}. {Given an observable $\hat{J}=\hat{J}^\dagger$} acting 
on $\mathsf{H_S}$ 
and living in $\mathsf{L}$, we have the following definitions:
\begin{itemize}
\item $\hat{J}$ is a \textit{conserved quantity} if it is a constant of motion 
under the non-unitary evolution generated by the master equation, i.e. if 
$\mathcal{L}^\dagger[\hat{J}(t)]=0$ for all $t$.
\end{itemize}
We construct the one-parameter unitary group whose elements are 
$\hat{U}_\phi=\exp(i\phi \hat{J})$ with $\phi\in\mathbb{R}$, and then we define 
the associated superoperators $\mathcal{U}_\phi$ as 
$\mathcal{U}_\phi^\dagger[\hat{O}]=\hat{U}_\phi^\dagger \hat{O} \hat{U}_\phi$, 
with $\hat{O}\in \mathsf{L}$. We can analogously write 
$\mathcal{U}_\phi=\exp(i\phi\mathcal{J})$, {where $\mathcal{J}$ is the 
superoperator associated to $\hat{J}$ through $\mathcal{J}=[\hat{J},\cdot\,]$. In the language of the isomorphism introduced through the tensor product notation in 
Appendix~\ref{sec:superoperatorsSpace}, we have
$\mathcal{J}=\hat{J}\otimes\mathbb{I}_N-\mathbb{I}_N\otimes \hat{J}^T$. }
\begin{itemize}
\item $\hat{J}$ generates \textit{a continuous symmetry on the superoperator 
level} if $\mathcal{U}_\phi^\dagger\mathcal{L}\mathcal{U}_\phi=\mathcal{L}$ for 
all $\phi$, or equivalently $[\mathcal{J},\mathcal{L}]=0$. {The continuous symmetry is also called \textit{covariance} \cite{Holevo1993,Holevo1996,vacchini2010covariant}, given that it corresponds to the equivalence $\hat{U}_\phi^\dagger \mathcal{L}[\rho_S]\hat{U}_\phi=\mathcal{L}[\hat{U}_\phi^\dagger \rho_S \hat{U}_\phi]$, for any state of the system $\rho_S$.} 
\end{itemize}

If the evolution of the system were unitary and driven only by the Hamiltonian 
$\hat{H}_S$, according to Noether's theorem a conserved quantity would always 
generate a
symmetry and viceversa. In the framework of open systems this is no longer true, 
since for instance a symmetry on the superoperator level not always implies a 
symmetry 
on the operator level. {In particular, if the master equation is in the Lindblad form as in 
Eq.~\eqref{eqn:LindbladForm}, we can consider the following three propositions:}
\begin{enumerate}[label=(\roman*)]
\item $[\hat{J},\hat{H}']=[\hat{J},\hat{F}_l]=0 \quad\forall\,l,$
\item $\frac{d}{dt}\hat{J}(t)=\mathcal{L}^\dagger[\hat{J}(t)]=0,$
\item $\mathcal{U}_\phi^\dagger\mathcal{L}\mathcal{U}_\phi=\mathcal{L}\;\forall 
\phi\in\mathbb{R},$ or equivalently $[\mathcal{J},\mathcal{L}]=0$.
\end{enumerate}
Then, we have that (i) implies (ii) and (iii), but no other logical implications 
are present \cite{Albert2014}. 
This tells us that in order for an observable $\hat{J}$ to both be a conserved 
quantity {(ii)} and generate a symmetry  {(iii)},
it needs to commute both with the Hamiltonian $\hat{H}'$ driving the unitary 
part of the evolution and with each Lindblad operator.

{For the purpose of this paper, we are interested in the observable representing 
the total number of particles in a system:} 
suppose we have a system  
of $M$ bosonic or fermionic modes; then, the Hilbert space of the system is the 
tensor product of the Hilbert spaces of the $M$ modes. 
{The total-number-of-particles operator reads
\begin{equation}
\label{eqn:totalNumberOp}
\hat{N}=\sum_{k=1}^M \hat{n}_k,
\end{equation}
where $\hat{n}_k$ is the particle number operator of the $k$-th mode. $\hat{N}$ 
generates the one-parameter group $\hat{U}_\phi=\exp(i\phi \hat{N})$. 
If we set $\phi=\pi$, we obtain the \textit{parity operator}:
\begin{equation}
\label{eqn:parityOp}
\hat{P}=\exp(i\pi \hat{N}).
\end{equation}
The parity operator satisfies the properties $\hat{P}^2=\mathbb{I}$ and 
$\hat{P}^\dagger=\hat{P}$, and as a consequence it only has two eigenvalues, 
$\pm 1$. 
{Parity} is an observable which can generate a \textit{discrete} symmetry on 
the superoperator level\footnote{Discrete simmetries in the Lindblad formalism 
deserve a separate discussion, and we refer the interested reader to 
Ref.~\cite{Albert2014}.}. In analogy with the definition of a continuous 
symmetry, we write the parity superoperator as 
\begin{equation}
\label{eqn:paritySupOp}
\mathcal{P}=\exp(i\pi\mathcal{N}),
\end{equation}
{where $\mathcal{N}$ is defined as
\begin{equation}
\label{eqn:supOpAbstr}
\mathcal{N}=[\hat{N},\cdot\,].
\end{equation}
Equivalently, using the tensor product notation {(see 
Appendix~\ref{sec:superoperatorsSpace})} we have
\begin{equation}
\label{eqn:superoperatorN}
\mathcal{N}=\hat{N}\otimes\mathbb{I}-\mathbb{I}\otimes\hat{N}^T.
\end{equation}	}}

Being different objects, symmetries and conserved quantities play a different 
role in the analysis of the evolution of open quantum systems 
\cite{Baumgartner2007,Baumgartner2008,Buca2012,Albert2014}. Conserved quantities 
are of fundamental importance to identify the structure of the space of 
stationary states of the systems \cite{Albert2014}, related to the problem of 
finding decoherence-free subspaces \cite{Lidar2003}. Symmetries can help in 
simplifying the form of the Liouvillian superoperator, and thus in solving the 
master equation. Indeed, if we identify a symmetry such that 
$[\mathcal{J},\mathcal{L}]=0$, we can block-diagonalize the Liouvillian with 
each block labeled by a different eigenvalue of $\mathcal{J}$. As we will see in 
the next section, this can greatly reduce the complexity of the master equation.

\section{The block structure of the Liouvillian in partial secular 
approximation}
\label{sec:block}
In this section we will show how, for a broad class of {models}, the partial 
secular approximation naturally induces a symmetry on the superoperator level, 
which can be exploited to simplify the master equation. Note that we can apply 
the concepts of Sec.~\ref{sec:symmetries}, introduced in the Lindblad formalism, 
to the Bloch-Redfield master equation in partial secular approximation, {since 
as explained in Appendix~\ref{sec:blochLin}} the latter can be brought to 
the GKLS form. 

{We start by introducing the suitable class of Hamiltonians in 
Sec.~\ref{sec:requirements}, and then we focus on the identification of the 
symmetry in 
Sec.~\ref{sec:symRes}. {Section}~\ref{sec:consequences} discusses a series of 
interesting applications and consequences of the main result.}

\subsection{Delimiting the suitable class of systems}
\label{sec:requirements}
{Our analysis applies to all systems that can be cast} as the sum of the free 
Hamiltonians of $M$ non-interacting bosonic or fermionic modes{, with
\begin{equation}
\label{eqn:HamiltonianNonIntModes}
\hat{H}_S=\sum_{k=1}^M E_k \hat{c}_k^\dagger\hat{c}_k=\sum_{k=1}^M E_k 
\hat{n}_k,
\end{equation}
and} $E_k$ is the energy quantum of the $k$-th mode.

Eq.~\eqref{eqn:HamiltonianNonIntModes} describes a broad class of Hamiltonians 
which are particularly relevant in the field{s} of condensed matter {and optical 
physics}. {For instance, any quadratic Hamiltonian, that is to say any Hamiltonian of the form
\begin{equation}
\label{eqn:quadraticHam}
\begin{split}
\hat{H}_S
=&\sum_{j,k=1}^M 
\left(\alpha_{jk}\hat{a}_j^\dagger\hat{a}_k+\beta_{jk}\hat{a}_j\hat{a}
_k+h.c.\right),
\end{split}
\end{equation}
where $\hat{a}_j$ is an annihiliation operator, can be rewritten as a sum of 
non-interacting modes as in Eq.~\eqref{eqn:HamiltonianNonIntModes} 
\cite{Lieb1961,Nielsen2005}.} This is just a sufficient but 
not necessary condition, since more complex Hamiltonians may be {taken into} 
the form of 
Eq.~\eqref{eqn:quadraticHam}. In the case of bosons, {all $\hat{H}_S$ preserving 
Gaussian states can be recast as Eq.~\eqref{eqn:HamiltonianNonIntModes}.
These Hamiltonians contain linear and/or bilinear terms and can be {reduced 
into} the form of Eq.~\eqref{eqn:quadraticHam} through displacement 
transformations.}
Systems of uncoupled spins can be trivially seen as non-interacting fermions 
via Jordan–Wigner transformation \cite{Lieb1961,coleman2015introduction}, and thus are suitable for 
our discussion. {The same holds for interacting spin chains in which the total number of spin excitations is conserved (see Appendix~\ref{sec:JordanWigner}).} Two coupled qubits can be transformed into 
free fermions as well, as discussed in Appendix~\ref{sec:coupledSpinsFree}, {while} 
extensions to wider systems of interacting spins {(such as the Heisenberg model)} are tricky and must be 
considered 
case by case.

We now {set the relevant assumptions} on 
the interaction Hamiltonian $\hat{H}_I$ in Eq.~\eqref{eqn:Hamiltonian}. 
First of all, recalling that $\mu$ is the system-bath coupling constant defined 
in Eq.~\eqref{eqn:Hamiltonian}, {we set $E_k\neq\mathcal{O}_{t^*}(\mu^2)\;\forall\, k$, where we have used the notation introduced in Eq.~\eqref{eqn:ordering}}. Then, the 
interaction Hamiltonian is suitable for our analysis if \textit{at least one} of 
the following conditions holds:
\begin{itemize}
\item \textit{Condition I.} Each system operator $\hat{A}_\alpha$ in 
Eq.~\eqref{eqn:Hamiltonian} {involves} only single 
excitations, that is to say, each $\hat{A}_\alpha$ is a first-degree polynomial 
in the creation and annihilation operators $\hat{c}_k$. For instance, 
$\hat{A}_{\alpha'}=\hat{c}_1+\hat{c}_2^\dagger$ is a valid system operator, 
while $\hat{A}_{\alpha'}=\hat{c}_1\hat{c}_2^\dagger$ or 
$\hat{A}_{\alpha'}=\hat{c}_1\hat{c}_2$ are not.
\item \textit{Condition II.} Let us consider the set of energies 
$K=\{E_k\}_{k=1}^M$. Create two new sets by randomly selecting some elements of 
$K$, that can be repeated as well, and term them $X$ and $Y$; assume that they 
have different cardinality (number of elements): $\abs{X}\neq\abs{Y}$. {Then, we 
{exclude situations such} that $\sum_{E_m \in X} E_m=\sum_{E_l \in Y} 
E_l+\mathcal{O}_{t^*}(\mu^2)$}. 
{This condition can be relaxed depending on the structure of the system 
operators in the interaction Hamiltonian, as we will show in the proof in Appendix~\ref{sec:proofT1}.} {\textit{Condition II} relaxed as explained in Appendix~\ref{sec:proofT1} comprises \textit{Condition I} together with the assumption $E_k\neq\mathcal{O}_{t^*}(\mu^2)$.}
\end{itemize}

\subsection{The symmetry of the partial secular approximation}
\label{sec:symRes}
We will now show that, if the requirements of Sec.~\ref{sec:requirements} are 
satisfied, the number superoperator $\mathcal{N}$ defined in 
Eq.~\eqref{eqn:superoperatorN} commutes with the Liouvillian in partial secular 
approximation Eq.~\eqref{eqn:Liouvillian}, and therefore generates a symmetry on 
the superoperator level. 
\begin{propx}[Symmetry]
Let $\mathcal{L}$ be the Liouvillian superoperator describing the Markovian 
evolution of a quantum system that can be written as a collection of bosonic or 
fermionic non-interacting modes. If $\mathcal{L}$ has been derived, starting 
from the microscopic model of system+environment, through the Bloch-Redfield 
master equation in partial secular approximation, then it commutes with the 
number superoperator:
\begin{equation}
\label{eqn:symmetry}
[\mathcal{N},\mathcal{L}]=0,
\end{equation}
provided that one {of the conditions I or II} on the interaction 
Hamiltonian $\hat{H}_I$ discussed in Sec.~\ref{sec:requirements} holds.
\end{propx}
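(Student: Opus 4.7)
The plan is to verify $[\mathcal{N}, \mathcal{L}] = 0$ piece by piece on the decomposition $\mathcal{L} = -i[\hat{H}_S + \hat{H}_{LS}, \cdot\,] + \mathcal{D}$. The unitary $\hat{H}_S$ piece is immediate because $\hat{H}_S = \sum_k E_k \hat{n}_k$ is itself a polynomial in the number operators, so $[\hat{N}, \hat{H}_S] = 0$ by inspection. The real work therefore reduces to showing that the bilinear combinations $\hat{A}_\alpha^\dagger(\omega') \hat{A}_\beta(\omega)$ that appear in both $\hat{H}_{LS}$ and $\mathcal{D}$ commute with $\hat{N}$ once the PSA sum is taken.

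First I would decompose each jump operator into pieces of definite particle-number shift. Since $\hat{H}_S$ is diagonal in the Fock basis, every matrix element selected by Eq.~\eqref{eqn:jumpOp} connects Fock states $\ket{e_m}, \ket{e_n}$ with a well-defined integer $\Delta N_{nm} = N_n - N_m$, so one can write $\hat{A}_\alpha(\omega) = \sum_{\Delta N} \hat{A}_\alpha(\omega, \Delta N)$ with $[\hat{N}, \hat{A}_\alpha(\omega, \Delta N)] = \Delta N \, \hat{A}_\alpha(\omega, \Delta N)$. A direct Leibniz computation on each bilinear contribution then shows that the action of $\mathcal{N}$ multiplies the term labelled by $(\omega, \omega', \Delta N, \Delta N')$ by the numerical factor $\Delta N - \Delta N'$, with the same factor appearing in the jump and anticommutator parts of $\mathcal{D}$. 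The proposition thus collapses to the claim that every bilinear surviving the PSA satisfies $\Delta N = \Delta N'$.

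To close this last step I would treat the two sufficient conditions separately. Under Condition~I, linearity of the $\hat{A}_\alpha$ forces each nonzero piece of a jump operator to carry $\Delta N = \mp 1$ according to $\mathrm{sgn}(\omega) = \pm 1$; if $\mathrm{sgn}(\omega) \neq \mathrm{sgn}(\omega')$ then $|\omega - \omega'| \geq |\omega| + |\omega'|$, which together with the hypothesis $E_k \neq \mathcal{O}_{t^*}(\mu^2)$ lies well outside the PSA window $\mathcal{O}_{t^*}(\tau_R^{-1})$, so such a pair is discarded. Under Condition~II, for any PSA-surviving pair one writes $\omega = \sum_k \mu_k E_k$ and $\omega' = \sum_k \mu'_k E_k$ with $\Delta N = -\sum_k \mu_k$ (and analogously for $\Delta N'$), splits $\mu_k - \mu'_k$ into positive and negative parts to build the multisets $X$ and $Y$ of Sec.~\ref{sec:requirements}, and observes that the PSA bound $|\omega - \omega'| = \mathcal{O}_{t^*}(\mu^2)$ is precisely $\sum_{E_m \in X} E_m - \sum_{E_l \in Y} E_l = \mathcal{O}_{t^*}(\mu^2)$. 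Condition~II then forces $|X| = |Y|$, which is exactly $\Delta N' - \Delta N = 0$.

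The main obstacle I anticipate is the last bookkeeping step: cleanly matching the microscopic PSA tolerance on $|\omega - \omega'|$ to the abstract multiset cardinality condition requires keeping track of signed occupation changes uniformly across all Fock-basis matrix elements, rather than monomial by monomial in the possibly intricate $\hat{A}_\alpha$. The relaxation of Condition~II alluded to in Sec.~\ref{sec:requirements} would enter here by restricting the allowed $(\Delta N, \Delta N')$ pairs to those that can actually be generated by the specific $\hat{A}_\alpha$ of the model, which is the natural route through which the relaxed hypothesis degenerates to Condition~I as a special case.
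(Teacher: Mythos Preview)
Your proposal is correct and is essentially the paper's own argument, recast in slightly more abstract language. The paper works in the vectorized representation and checks the four building blocks of $\mathcal{L}$ (Assertions (1)--(4) in Appendix~\ref{sec:proofT1}) by writing each jump operator as a sum over Fock-basis projectors $\ket{e}\bra{e'}$ and showing that, whenever $(\omega,\omega')\in\mathsf{PSA}$, Condition~II forces the two occupation-number differences $n^e-n^{e'}$ and $n^\epsilon-n^{\epsilon'}$ to coincide; your $\Delta N$-grading $\hat{A}_\alpha(\omega)=\sum_{\Delta N}\hat{A}_\alpha(\omega,\Delta N)$ is exactly the same decomposition, and your conclusion $\Delta N=\Delta N'$ is literally the paper's $n^e-n^{e'}=n^{\epsilon}-n^{\epsilon'}$. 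The only cosmetic difference is that you treat Condition~I by a separate sign-of-$\omega$ argument, whereas the paper absorbs Condition~I into the relaxed form of Condition~II and proves both at once; your route is slightly longer but equally valid.
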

\begin{proof}
In Appendix~\ref{sec:proofT1}.
\end{proof}

{Note that Proposition 1 may be considered as the extension to systems of $M$ modes of the concept of \textit{phase-covariant master equation} \cite{vacchini2010covariant,Smirne2016,Teittinen2018,haase2019non}. By now, the latter has been addressed as the problem in which a system of a single qubit follows an open dynamics described by the Liouvillian $\mathcal{L}$ which is covariant under a phase transformation, i.e. $e^{-i\phi\hat{\sigma}_z}\mathcal{L}[\rho_S]e^{i\phi\hat{\sigma}_z}=\mathcal{L}[e^{-i\phi\hat{\sigma}_z}\rho_S e^{i\phi\hat{\sigma}_z}]$, which corresponds to Eq.~\eqref{eqn:symmetry} in the case of a single fermionic mode. Therefore, the symmetry group generated by $\mathcal{N}$ is isomorphic to $U(1)$. A complete characterization of the single-qubit phase-covariant master equation can be found in the supplementary material of Ref.~\cite{Smirne2016}.
}

\subsection{Consequences of the symmetry}
\label{sec:consequences}
{In this section we discuss a list of interesting consequences of the symmetry 
presented in Proposition 1. We start with a simple corollary:}
\begin{lemma}[Parity]
If the conditions for Proposition 1 hold, then the parity superator 
$\mathcal{P}$ is a symmetry on the superoperator level as well: 
$[\mathcal{P},\mathcal{L}]=0$.
\end{lemma}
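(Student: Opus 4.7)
The plan is to deduce the parity symmetry directly from Proposition 1 by exploiting the functional relation between $\mathcal{P}$ and $\mathcal{N}$ given in Eq.~\eqref{eqn:paritySupOp}. Since $\mathcal{P}=\exp(i\pi\mathcal{N})$ is by definition a power series in the superoperator $\mathcal{N}$, and Proposition 1 guarantees $[\mathcal{N},\mathcal{L}]=0$, it will suffice to propagate the commutation relation from $\mathcal{N}$ to any analytic function of $\mathcal{N}$.

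First I would verify inductively that $[\mathcal{N}^n,\mathcal{L}]=0$ for every nonnegative integer $n$. The base case $n=0$ is trivial and the case $n=1$ is Proposition 1. For the inductive step one writes
\begin{equation*}
[\mathcal{N}^{n+1},\mathcal{L}]=\mathcal{N}[\mathcal{N}^{n},\mathcal{L}]+[\mathcal{N},\mathcal{L}]\mathcal{N}^{n},
\end{equation*}
and both terms on the right-hand side vanish by the inductive hypothesis and by Proposition 1 respectively.

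Next I would assemble the exponential. Using the series definition
\begin{equation*}
\mathcal{P}=\exp(i\pi\mathcal{N})=\sum_{n=0}^{\infty}\frac{(i\pi)^n}{n!}\mathcal{N}^{n},
\end{equation*}
the commutator with $\mathcal{L}$ can be brought inside the sum by linearity, giving $[\mathcal{P},\mathcal{L}]=\sum_{n}\frac{(i\pi)^n}{n!}[\mathcal{N}^{n},\mathcal{L}]=0$. This establishes the claim. I would briefly observe that, in the language of continuous symmetries of Sec.~\ref{sec:symmetries}, this is the statement that the discrete transformation $\mathcal{P}=\mathcal{U}_{\phi=\pi}$ is inherited as a special case of the $U(1)$ covariance generated by $\mathcal{N}$, so no genuinely new computation is needed.

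I do not anticipate a significant obstacle here: the only subtlety is that we are manipulating exponentials of superoperators acting on the finite-dimensional Liouville space $\mathsf{L}$ (or, in the bosonic case, an operator defined by its action on a suitable dense domain), so convergence of the exponential series is automatic in the fermionic setting and well-defined in the bosonic setting through the spectral decomposition of $\mathcal{N}$. The only care to take is to keep track of the fact that the commutation is between superoperators on $\mathsf{L}$, not between operators on $\mathsf{H}_S$.
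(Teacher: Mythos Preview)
Your proof is correct and follows essentially the same approach as the paper: both invoke Proposition 1 to obtain $[\mathcal{N},\mathcal{L}]=0$ and then use the definition $\mathcal{P}=\exp(i\pi\mathcal{N})$ to conclude. The paper states this in one line, while you have simply spelled out the standard power-series/induction justification behind that step.
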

\begin{proof}
If the conditions for Proposition 1 hold, then $[\mathcal{N},\mathcal{L}]=0$. 
But according to Eq.~\eqref{eqn:paritySupOp} 
$\mathcal{P}=\exp(i\pi\mathcal{N})$, 
thus the parity superoperator must commute with the Liouvillian as well 
{demonstrating} the assertion.
\end{proof}

Notice that the symmetries {in Proposition 1 and Corollary 1 are}, in general, 
only on the superoperator level. Indeed, we are not imposing any further 
condition on the form of the interaction and on the {spectral density} of 
environment, that is to say, the result of Eq.~\eqref{eqn:symmetry} is 
{an interesting consequence of the partial secular approximation} {only}. This 
includes cases in which the parity of the number of particles (on the operator 
level) 
is modified by the interaction with the environment. For instance, the very 
common decay of a single mode of the electromagnetic field, described as 
$\dot{\rho}=
a \rho a^\dagger-1/2 \{a^\dagger a,\rho\}$ \cite{BreuerPetruccione}, clearly 
does not conserve either $\hat{N}$ or $\hat{P}$, while {as it holds the} partial 
secular 
approximation it fulfils Eq.~\eqref{eqn:symmetry}.

How can we exploit Eq.~\eqref{eqn:symmetry} for the analysis of the open system? 
As already mentioned in Sec.~\ref{sec:symmetries}, the symmetry generated by the 
number superoperator allows us to block-diagonalize the Liouvillian in a way 
that is particularly convenient for the solution of the master equation. Indeed, 
the eigenvectors of $\mathcal{N}$ {in the representation expressed by Eq.~\eqref{eqn:superoperatorN}} are given by the tensor product of the 
diagonal basis of $\hat{H}_S$ with itself. That is to say, if we rewrite the 
system Hamiltonian as $\hat{H}_S=\sum_n \epsilon_n \ket{e_n}\bra{e_n}$, we 
choose the basis of the space of superoperators 
$\{\ket{e_n}\otimes\ket{e_m}\}_{n,m}$. This is exactly the basis we work with 
when deriving the Bloch-Redfield master equation, since it is the basis in which 
we write the jump operators \cite{BreuerPetruccione,Weiss,Cattaneo2019}. 
Therefore, if we express $\mathcal{L}$ as a matrix in the basis 
$\ket{e_n}\otimes\ket{e_m}$, and we regroup all the elements of the basis which 
are eigenvectors of $\mathcal{N}$ with the same 
eigenvalue $d$, we naturally find the blocks of the Liouvillian in such basis. 
Note that $d$ is the difference between the number of particles in the state 
$\ket{e_n}$ and the number of particles in $\ket{e_m}$. We can express this fact 
in the following proposition:
\begin{propx}[Blocks]
In a system of $M$ bosonic or fermionic modes in which Proposition 1 
holds, the Liouvillian superoperator can be divided {into} blocks as 
$\mathcal{L}=\bigoplus_{d} \mathcal{L}_d$, where $\mathcal{L}_d$ is the 
block labeled by the eigenvalue $d$ of $\mathcal{N}$. Let us write 
$\mathcal{L}_d$ as a matrix in a basis $\{\ket{e_j}\otimes\ket{e_k'}\}_{j,k}$ 
which spans its space, where $\ket{e_j}$ and $\ket{e_k'}$ are eigenvectors of 
$\hat{H}_S$. Then, if we write $\mathcal{L}_{-d}$ as a matrix in the basis 
$\{\ket{e_k'}\otimes\ket{e_j}\}_{j,k}$ these matrices satisfy 
$\mathcal{L}_d=\mathcal{L}_{-d}^*$.
\end{propx}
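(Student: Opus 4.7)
The plan is to establish Proposition~2 in two stages. First, I would argue that the block decomposition $\mathcal{L}=\bigoplus_d \mathcal{L}_d$ follows directly from the symmetry $[\mathcal{N},\mathcal{L}]=0$ proved in Proposition~1. Second, I would obtain the relation $\mathcal{L}_d=\mathcal{L}_{-d}^*$ by combining this block structure with the Hermiticity-preserving property of any valid GKLS generator.

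For the first stage, I would work in the vectorized picture of Appendix~\ref{sec:superoperatorsSpace}, in which $\mathcal{N}=\hat{N}\otimes\mathbb{I}-\mathbb{I}\otimes \hat{N}^T$. Since $\hat{N}$ is diagonal in the energy eigenbasis $\{\ket{e_j}\}$ with eigenvalues $N_j$, the tensor products $\ket{e_j}\otimes\ket{e_k}$ diagonalize $\mathcal{N}$ with eigenvalue $N_j-N_k$. Proposition~1 guarantees that $\mathcal{L}$ commutes with $\mathcal{N}$ and therefore leaves each eigenspace invariant; collecting the basis vectors according to the value of $d=N_j-N_k$ yields the claimed direct-sum decomposition, with $\mathcal{L}_d$ acting on the subspace labelled by $d$.

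For the second stage, the crucial input is that the Liouvillian preserves Hermiticity, i.e.\ $\mathcal{L}[\rho^\dagger]=(\mathcal{L}[\rho])^\dagger$ for every $\rho$. In the vectorized language the antilinear involution $\rho\mapsto\rho^\dagger$ corresponds to swapping the two tensor factors together with entrywise complex conjugation in the basis $\{\ket{e_j}\otimes\ket{e_k}\}$. Denoting by $\mathcal{S}$ the linear swap and by $\mathcal{C}$ the (antilinear) complex conjugation in that basis, Hermiticity preservation reads $\mathcal{S}\mathcal{C}\,\mathcal{L}=\mathcal{L}\,\mathcal{S}\mathcal{C}$, which rearranges to the purely linear identity
\begin{equation*}
\mathcal{S}\,\mathcal{L}\,\mathcal{S}=\mathcal{L}^{*},
\end{equation*}
where $\mathcal{L}^{*}$ denotes the entrywise complex conjugate of the matrix of $\mathcal{L}$ in the Fock basis. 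A direct check gives $\mathcal{N}\mathcal{S}=-\mathcal{S}\mathcal{N}$, so that $\mathcal{S}$ maps the $d$-eigenspace of $\mathcal{N}$ bijectively onto the $-d$-eigenspace, sending the basis $\{\ket{e_j}\otimes\ket{e_k'}\}$ of $\mathcal{L}_d$ to $\{\ket{e_k'}\otimes\ket{e_j}\}$, which is exactly the basis chosen in the statement for $\mathcal{L}_{-d}$. Taking matrix elements of the above identity in these two matched bases then reduces to $\mathcal{L}_d=\mathcal{L}_{-d}^{*}$.

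I do not expect any deep obstacle in this argument, as both inputs—the symmetry and Hermiticity preservation—are already established; the main delicacy is the bookkeeping. In particular, one must verify that pairing the index $(j,k)$ of $\mathcal{L}_{-d}$ with the swapped vector $\ket{e_k'}\otimes\ket{e_j}$ (rather than with $\ket{e_j}\otimes\ket{e_k'}$) is precisely the identification induced by $\mathcal{S}$, so that the entrywise conjugation of $\mathcal{L}$ translates into a relation between blocks rather than into a more involved similarity transformation. Once this identification is made explicit, the result is an immediate corollary of Proposition~1 together with the Hermiticity-preserving character of the Bloch--Redfield generator in PSA.
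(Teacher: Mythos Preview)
Your proposal is correct and follows essentially the same approach as the paper: block-diagonalization from $[\mathcal{N},\mathcal{L}]=0$, followed by the relation $\mathcal{L}_d=\mathcal{L}_{-d}^*$ from Hermiticity preservation. The paper states the second step in one line by invoking $(\rho_S)_{jk}=(\rho_S)_{kj}^*$, whereas you spell out the same mechanism explicitly via the swap $\mathcal{S}$ and conjugation $\mathcal{C}$; this is more detailed bookkeeping but not a different idea.
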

\begin{proof}
The Liouvillian can be block-diagonalized thanks to the symmetry expressed by 
Eq.~\eqref{eqn:symmetry}, generated by $\mathcal{N}$ whose eigenvalues label the 
blocks. 
$\mathcal{L}$ describes the dynamics of the density matrix of the system 
$\rho_S$ as in Eq.~\eqref{eqn:masterEqLiouvillian}, 
but since $(\rho_S)_{jk}=(\rho_S)_{kj}^*$, we have in the chosen 
{bases}  $\mathcal{L}_d=\mathcal{L}_{-d}^*$.
\end{proof}

Proposition 2 tells us that the symmetry in Eq.~\eqref{eqn:symmetry} not only 
provides a block division {for the Liouvillian}, but also reduces the number 
of
independent blocks, e.g. for fermions {from $2M+1$ to $M+1$}. This may greatly simplify the solution 
of the master equation, which now would live in spaces of lower dimension.
We will show in Sec.~\ref{sec:examples} some examples of this block 
diagonalization and complexity reduction.

Each block of the Liouvillian superoperator may give us important insight about 
a certain physical phenomenon of interest. 
If we know that a given block contains all the relevant information about such 
phenomenon, we may indeed analyze only this block and neglect all the rest, 
thus working in a far smaller space than the one in which $\mathcal{L}$ lives. 
{This happens, for instance, in Ref.~\cite{Bellomo2017}, where two independent 
blocks of 
the Liouvillian superoperator describing the decay of two spins (corresponding 
to the blocks discussed in Proposition 2) contain all the information about two 
different 
physical phenomena, namely superradiance and quantum synchronization. {Besides, note that all the populations of the state of the system belong to the block $\mathcal{L}_0$.

Finding the unique steady state of a relaxing Lindblad dynamics is another 
example of {the advantages entailed by the block structure of 
$\mathcal{L}$}:}
a steady state of the open dynamics is a state $\rho_{ss}$ such that 
$\mathcal{L}[\rho_{ss}]=0$. 
It always exists at least one steady state for finite systems 
\cite{Rivas2012,Baumgartner2008} and, if it is unique, then the semigroup is 
relaxing, i.e. any state is 
driven toward $\rho_{ss}$ for $t\rightarrow \infty$, and no oscillating 
coherence survives.

The unique steady state ``lives'' in the subspace of the block $\mathcal{L}_0$ 
only. {Indeed, let us call $\Pi_0$ the  projector over the eigenspace of 
$\mathcal{N}$ associated to the eigenvalue $0$. 
Then, the 
following proposition  holds: }
\begin{propx}[Steady state]
If the conditions for Proposition 1 hold and the semigroup generated by 
$\mathcal{L}$ is relaxing toward a unique steady state $\rho_{ss}$, then 
$\Pi_0[\rho_{ss}]=\rho_{ss}$. i.e. the only non-zero elements of the density 
matrix representing $\rho_{ss}$ in the excitation basis are the ones with equal 
number of excitations in the ket and in the bra. 
\end{propx}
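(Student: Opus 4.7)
The plan is to leverage the continuous nature of the symmetry generated by $\mathcal{N}$. From Proposition~1 we have $[\mathcal{N},\mathcal{L}]=0$, and therefore the one-parameter family $\mathcal{U}_\phi=\exp(i\phi\mathcal{N})$ also commutes with $\mathcal{L}$ for every $\phi\in\mathbb{R}$. Since $\mathcal{N}=[\hat{N},\cdot\,]$ acts by the adjoint action, $\mathcal{U}_\phi$ is conjugation by the unitary $e^{i\phi\hat{N}}$: $\mathcal{U}_\phi[\rho]=e^{i\phi\hat{N}}\rho e^{-i\phi\hat{N}}$. The crucial consequence of this representation is that $\mathcal{U}_\phi$ maps density matrices to density matrices.

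Next I would show that $\mathcal{U}_\phi[\rho_{ss}]$ is itself a steady state for every $\phi$: indeed $\mathcal{L}[\mathcal{U}_\phi[\rho_{ss}]]=\mathcal{U}_\phi[\mathcal{L}[\rho_{ss}]]=0$. Since $\mathcal{U}_\phi[\rho_{ss}]$ is a legitimate density matrix (trace, Hermiticity and positivity are all preserved by unitary conjugation), the uniqueness hypothesis on $\rho_{ss}$ forces $\mathcal{U}_\phi[\rho_{ss}]=\rho_{ss}$ for all $\phi\in\mathbb{R}$. Differentiating this identity at $\phi=0$ yields $[\hat{N},\rho_{ss}]=0$.

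Finally I would translate this commutation into the stated block property. Because $\hat{H}_S=\sum_k E_k\hat{n}_k$, the energy eigenbasis $\{\ket{e_n}\}$ can be chosen to simultaneously diagonalize $\hat{N}$, with eigenvalue $N_n$ on $\ket{e_n}$. The tensor-product form~\eqref{eqn:superoperatorN} then makes explicit that $\mathcal{N}$ acts on the dyad $\ket{e_n}\bra{e_m}$ with eigenvalue $N_n-N_m$, so the kernel of $\mathcal{N}$ (i.e. the image of $\Pi_0$) is spanned by those dyads with $N_n=N_m$. The relation $[\hat{N},\rho_{ss}]=0$ forces $\bra{e_n}\rho_{ss}\ket{e_m}=0$ whenever $N_n\neq N_m$, which is precisely the assertion $\Pi_0[\rho_{ss}]=\rho_{ss}$.

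I do not anticipate any serious obstacle: the argument is the standard promotion of a covariance relation to an invariance of the (unique) fixed point. The only point requiring mild care is the verification that $\mathcal{U}_\phi$ really maps the set of density matrices to itself, which is immediate from the fact that it is a unitary conjugation; without this, the uniqueness hypothesis on $\rho_{ss}$ could not be invoked to conclude $\mathcal{U}_\phi[\rho_{ss}]=\rho_{ss}$.
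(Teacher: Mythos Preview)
Your proof is correct and takes a genuinely different route from the paper's. The paper argues by contradiction: assuming $\Pi_d[\rho_{ss}]\neq 0$ for some $d\neq 0$, it constructs a second steady state $\rho_{ss}'=\Pi_0[\rho_{ss}]$ by \emph{deleting} all coherences between sectors with different total excitation number, then invokes the block structure to check that $\mathcal{L}[\rho_{ss}']=0$, contradicting uniqueness. You instead exploit the \emph{continuous} nature of the symmetry: rather than projecting, you \emph{rotate} $\rho_{ss}$ by $\mathcal{U}_\phi$ to manufacture a one-parameter family of steady states, which uniqueness collapses to $\rho_{ss}$ itself, giving $[\hat{N},\rho_{ss}]=0$ directly.

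Your argument is arguably cleaner on one point: the positivity of your candidate second steady state is immediate, since $\mathcal{U}_\phi$ is unitary conjugation. In the paper's version the claim that $\Pi_0[\rho_{ss}]$ is still a physical state (``since we have obtained it by removing coherences'') is true but deserves a word of justification---it holds precisely because $\Pi_0$ is the $U(1)$-twirl $\Pi_0=\frac{1}{2\pi}\int_0^{2\pi}\mathcal{U}_\phi\,d\phi$, a convex mixture of states. In that sense the two proofs are cousins: the paper's pinching is the average of your rotations. What the paper's approach buys is that it never needs differentiability or the continuous parameter; it works at the level of the block decomposition alone, and would go through verbatim for a discrete symmetry with the same $0$-eigenspace.
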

\begin{proof}
{Let us suppose that the steady state 
$\rho_{ss}$ has a non-zero component in a subspace projected by $\Pi_d$ with $d\neq 0$: $\Pi_d[\rho_{ss}]\neq 0$. Coming back to the space of density matrices, this means that the density matrix of the steady state in the excitation basis has some non-zero elements
with different number of particles in the bra and in the
ket.
Therefore,} there exists a block $\mathcal{L}_d$ with $d\neq 0$ having a 
zero eigenvalue. Furthermore, the block $\mathcal{L}_0$ must have a zero 
eigenvalue as well, since for $\rho_{ss}$ to be a physical state it must possess 
diagonal elements. We now build a new state $\rho_{ss}'$ such that 
{$\Pi_0[\rho_{ss}']=\Pi_0[\rho_{ss}]$} and {$\Pi_k[\rho_{ss}']=0$} for all $k\neq 
0$. $\rho_{ss}'$ is a physical state (since we have obtained it by removing 
coherences from $\rho_{ss}$) and is a steady state as well, since it has 
the same elements of $\rho_{ss}$ in the space projected by {$\Pi_0$} whose 
evolution must be independent from the one of the elements in the space 
projected by  {$\Pi_d$}. Therefore, the steady state is not unique anymore 
and we have proven the assertion by contradiction.
\end{proof}

{Proposition 3 implies the corollary } that $\mathcal{L}_0$ is the only block having an eigenvalue equal to zero, while all 
the eigenvalues of the remaining blocks have negative real part. 
Another {immediate} consequence is the following.

\begin{lemma}[Steady-state coherences]
If the conditions for Proposition 1 hold and the semigroup generated by $\mathcal{L}$ is relaxing toward a unique steady state, 
then the only non-zero steady-state coherences in the excitation basis must have the same number of excitations in the ket and in the bra.
\end{lemma}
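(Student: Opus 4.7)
The plan is to derive this corollary as an essentially immediate consequence of Proposition 3, since the latter has already pinned down the full structure of $\rho_{ss}$, not just its diagonal part. The strategy is therefore to translate the superoperator-level statement $\Pi_0[\rho_{ss}]=\rho_{ss}$ into a statement about the individual entries of the density matrix in the excitation basis $\{\ket{e_n}\}$, and then simply restrict attention to off-diagonal entries.

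First I would recall, using the tensor-product representation of $\mathcal{N}$ given in Eq.~\eqref{eqn:superoperatorN}, that the eigenspace of $\mathcal{N}$ with eigenvalue $0$ is spanned by those basis vectors $\ket{e_n}\otimes\ket{e_m}$ of Liouville space for which $\hat{N}\ket{e_n}=\hat{N}\ket{e_m}$, i.e.\ for which $\ket{e_n}$ and $\ket{e_m}$ carry the same total particle number. Consequently, $\Pi_0$ acts on a density matrix by retaining only those matrix elements $\bra{e_n}\rho\ket{e_m}$ with equal excitation number in ket and bra, and setting the rest to zero. Proposition 3 therefore states that $\bra{e_n}\rho_{ss}\ket{e_m}=0$ whenever $\ket{e_n}$ and $\ket{e_m}$ have different particle numbers.

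To conclude, I would observe that ``steady-state coherences in the excitation basis'' is by definition the collection of off-diagonal matrix elements $\bra{e_n}\rho_{ss}\ket{e_m}$ with $n\neq m$. Restricting the conclusion of Proposition 3 to these off-diagonal entries yields exactly the corollary: any surviving coherence must connect two excitation eigenstates carrying the same total number of excitations.

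There is essentially no hard step in this argument: the real work was carried out in Proposition 3, and what remains is purely a change of vocabulary from ``component in $\ker\mathcal{N}$'' to ``off-diagonal matrix element with equal ket and bra particle number.'' The only point to be careful about is making explicit the identification between the eigenspaces of $\mathcal{N}$ in the tensor-product representation and the pattern of matrix elements of $\rho_{ss}$ in the excitation basis, so that the reader can directly read off the conclusion about coherences.
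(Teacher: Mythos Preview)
Your proposal is correct and matches the paper's own treatment: the paper presents this corollary as an ``immediate consequence'' of Proposition 3 without giving a separate proof, and your argument spells out exactly that translation from $\Pi_0[\rho_{ss}]=\rho_{ss}$ to the vanishing of off-diagonal elements connecting states with different excitation numbers.
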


Proposition 3 is telling us that, when the semigroup dynamics is relaxing toward 
a unique steady state as it is often the case, we only need to find the 
eigenvalues and eigenvectors of the block $\mathcal{L}_0$ to characterize the 
stationary state. In particular, this restricts the range of possible 
steady-state coherences in which we may be interested, e.g. for thermodynamics 
tasks. Proposition 3 does not give information about scenarios with a broader 
space of steady states, such as in the presence of decoherence free subspaces 
and/or oscillating coherences. Further studies are needed toward this 
direction.

Finally, let us comment that Proposition 1 and Eq.~\eqref{eqn:symmetry} may not be
valid if we choose the local approach to derive the master equation 
\cite{Gonzalez2017,Hofer2017,Cattaneo2019} of a system {composed} of interacting subsystems (that can be rewritten as non-interacting 
normal modes). Indeed, the local basis used to find the jump operators would not 
coincide anymore with the diagonal basis of the normal modes of $\hat{H}_S$ 
\cite{Trushechkin2016,Cattaneo2019}, and this may create extra-terms in the 
Liouvillian superoperator which would not respect the rules discussed in 
Sec.~\ref{sec:symRes}. For some particular cases, this fact may turn the global 
approach computationally more convenient than the local one. 

\section{Examples}
\label{sec:examples}
In this section we will propose a couple of physical examples (one for fermions, 
one for bosons) 
in which the symmetry of Eq.~\eqref{eqn:symmetry} appears, and we will show how 
it significantly 
reduces the complexity of the master equation by a block diagonalization of the 
Liouvillian 
superoperator. {We choose as examples some simple low-dimensional cases, whose solution is in general already known, in order to show how to identify and employ the symmetry also in familiar scenarios. Of course, more cumbersome situations would exhibit an even more drastic dimensionality reduction.} For simplicity, from now on we will drop the hat sign over the 
operators living in the 
Liouville space $\mathsf{L}$.

\subsection{Fermions}
\label{sec:fermions}
Consider a system of $M$ non-interacting fermions, with Hamiltonian:
\begin{equation}
\label{eqn:freeHamFermions}
\sum_{k=1}^M E_k f_k^\dagger f_k.
\end{equation}
If we let the fermions interact with local and/or collective baths through an 
interaction Hamiltonian $H_I$ which satisfies {one of the conditions} 
discussed in Sec.~\ref{sec:requirements}, the Liouvillian superoperator will be 
block-diagonal with each block labeled by the eigenvalues of the operator 
$\mathcal{N}$ in Eq.~\eqref{eqn:supOpAbstr}. We will provide the explicit form 
of such a Liouvillian {for} $M=2$ fermions in 
Sec.~\ref{sec:intSpins}, {being this case of utmost importance in different 
fields 
such as quantum computation or quantum thermodynamics}. {Before that, let us 
establish} 
the dimension of each block {for any $M$}. 
Let us term $d$ an (integer) eigenvalue of $\mathcal{N}$
{assuming} values $d=-M,\ldots,-1,0,1,\ldots,M$. The dimension of the block 
$\mathcal{L}_d$ is given by the number of {excitation-basis} vectors, written in 
the tensor notation of Eq.~\eqref{eqn:isomorphism}, which have a difference 
between the number of excitations on the left and on the right of the tensor product equal to $d$. 
Taking into account all the possible combinations of suitable excitations in the 
vectors and all their possible permutations, the 
dimension of $\mathcal{L}_d$ reads:
\begin{equation}
\label{eqn:numFermions}
dim(\mathcal{L}_d)=\sum_{k=\abs{d}}^M \binom{M}{k}\cdot \binom{M}{k-\abs{d}}.
\end{equation}

\subsubsection{Two interacting spins as decoupled fermions}
\label{sec:intSpins}
Consider a system of two interacting spins with Hamiltonian:
\begin{equation}
\label{eqn:HamiltonianSpins}
H_S=\frac{\omega_1}{2}\sigma_1^z+\frac{\omega_2}{2}\sigma_2^z+\lambda 
\sigma_1^x\sigma_2^x.
\end{equation}
By employing the Jordan-Wigner transformations, a rotation and a Bogoliubov 
transformation 
{(see the discussion in Appendices~\ref{sec:JordanWigner} and~\ref{sec:coupledSpinsFree}),} we can rewrite the 
system Hamiltonian as:
\begin{equation}
\label{eqn:HamFermions}
H_S=E_1\left(2 f_1^\dagger f_1-1\right)+E_2\left(2 f_2^\dagger f_2-1\right),
\end{equation}
where $f_1$ and $f_2$ are fermionic operators satysfing the fermionic 
anticommutation rules: 
$\{f_j,f_k^\dagger\}=\delta_{jk}$, while the expressions of the energies $E_1$ 
and $E_2$ can be {found} in
Eq.~\eqref{eqn:E1E2}. The interaction eigenbasis of $H_S$ is 
$\{\ket{00}_f,\ket{01}_f,\ket{10}_f,\ket{11}_f\}$, and its relation with the 
canonical spin basis can be found in Eqs.~\eqref{eqn:groundState} 
and~\eqref{eqn:remainingStates}. {Although this transformation may appear redundant in the simple case of two qubits, it is fundamental to diagonalize more complex chains of interacting spins \cite{coleman2015introduction}, see for instance Appendix~\ref{sec:JordanWigner}.}

We couple each qubit to a separate thermal bath, such that the Hamiltonian of 
the environment is $H_E=\sum_k \Omega_k a_k^\dagger a_k+\sum_l \Omega'_l 
b_l^\dagger b_l$ and the interaction Hamiltonian reads:
\begin{equation}
\label{eqn:intHamSpins}
H_I=\sum_k g_k \sigma_1^x (a_k^\dagger+a_k)+\sum_l g'_l 
\sigma_2^x(b_l^\dagger+b_l),
\end{equation}
where $g_k$ and $g'_l$ determine the spectral densities of the baths 
\cite{BreuerPetruccione}. {As mentioned before, these} are not relevant for 
the present discussion
{and are assumed to display fast decaying correlation functions}, 
inducing a Markovian evolution. {We assume that the both baths are in a thermal state with temperature respectively $T_1$ and $T_2$. Such as a system is of fundamental importance e.g. for the understanding of quantum heat transport in quantum thermodynamics \cite{Hofer2017}.}

\begin{widetext}
Using Eqs.~\eqref{eqn:sigma1xJW} and~\eqref{eqn:sigma2xJW} we can rewrite the 
interaction Hamiltonian as:
\begin{equation}
\label{eqn:intHamFermions}
\begin{split}
H_I=&\sum_k g_k 
\left(\cos(\theta+\phi)(f_1^\dagger+f_1)+\sin(\theta+\phi)(f_2^\dagger+f_2)\right)(a_k^\dagger+a_k)\\
&+\sum_l g'_l 
\left(\cos(\theta-\phi)P(f_2^\dagger-f_2)+\sin(\theta-\phi)P(f_1^\dagger-f_1)\right)(b_l^\dagger+b_l),
\end{split}
\end{equation}
where $P$ is the parity operator, and we notice that each separate bath plays 
now the role of a common bath between the two fermionic modes. Note that 
Eq.~\eqref{eqn:intHamFermions} satisfies the second condition on the interaction 
Hamiltonian presented in Sec.~\ref{sec:requirements}.

The interacting Hamiltonian Eq.~\eqref{eqn:intHamFermions} leads to the 
following master equation:
\begin{equation}
\label{eqn:masterEqFermions}
\begin{split}
\frac{d}{dt}\rho_S(t)=&-i[H_S+H_{LS},\rho_S(t)]+\sum_{i,j=1,2} 
\gamma_{ij}^\downarrow\left(f_i\rho_S(t) f_j^\dagger-\frac{1}{2}\{f_j^\dagger 
f_i,\rho_S(t)\}\right)\\
&+\sum_{i,j=1,2} 
\gamma_{ij}^\uparrow\left(f_i^\dagger\rho_S(t) f_j-\frac{1}{2}\{f_j 
f_i^\dagger,\rho_S(t)\}\right)+\sum_{i,j=1,2} \eta^\downarrow_{ij}\left(Pf_i\rho_S(t)f_j^\dagger 
P-\frac{1}{2}\{f_j^\dagger f_i,\rho_S(t)\}\right)\\
&+\sum_{i,j=1,2} 
\eta_{ij}^\uparrow\left(f_i^\dagger P\rho_S(t) Pf_j-\frac{1}{2}\{f_j 
f_i^\dagger,\rho_S(t)\}\right),
\end{split}
\end{equation}
where the Lamb-shift Hamiltonian reads $H_{LS}=\sum_{i,j=1,2} (s_{ij}^\downarrow
f_j^\dagger f_i+s_{ij}^\uparrow
f_j f_i^\dagger)$. The coefficients
$\gamma_{ij}^\downarrow$, $\gamma_{ij}^\uparrow$, $\eta^\downarrow_{ij}$, $\eta^\uparrow_{ij}$, $s_{ij}^\downarrow$ and $s_{ij}^\uparrow$ depend on the spectral densities of 
the baths, {on the temperature} and on the weights of each term in the interaction Hamiltonian. We do 
not provide their explicit value here, and we refer the interested reader to 
the derivation in Refs.~\cite{BreuerPetruccione,Cattaneo2019}.

We now find the Liouvillian superoperator representing the master 
equation~\eqref{eqn:masterEqFermions} in the tensor product notation, as in 
Eq.~\eqref{eqn:LiouvillianBraket}. We identify five symmetry blocks of 
$\mathcal{L}$, associated to the following 
{bases}: 
$\ket{11}_f\otimes\ket{11}_f,\ket{10}_f\otimes\ket{10}_f,\ket{10}_f\otimes\ket{
01}_f,\ket{01}_f\otimes\ket{10}_f,\ket{01}_f\otimes\ket{01}_f,\ket{00}
_f\otimes\ket{00}_f$ corresponding to $\mathcal{N}=0$; 
$\ket{11}_f\otimes\ket{10}_f,\ket{11}_f\otimes\ket{01}_f,\ket{10}_f\otimes\ket{
00}_f,\ket{01}_f\otimes\ket{00}_f$ corresponding to $\mathcal{N}=1$; 
$\ket{10}_f\otimes\ket{11}_f,\ket{01}_f\otimes\ket{11}_f,\ket{00}_f\otimes\ket{
10}_f,\ket{00}_f\otimes\ket{01}_f$ corresponding to $\mathcal{N}=-1$; 
$\ket{11}_f\otimes\ket{00}_f$ corresponding to $\mathcal{N}=2$; 
$\ket{00}_f\otimes\ket{11}_f$ corresponding to $\mathcal{N}=-2$. The Liouvillian 
can be written as $\mathcal{L}=\bigoplus_{d=-2}^2 \mathcal{L}_d$, where the 
matrices representing each block in the associated basis are:
\begin{equation}
\label{eqn:mu0}
\mathcal{L}_0=\begin{pmatrix}
-\gamma_0^\downarrow-\eta_0^\downarrow&\gamma_{22}^\uparrow+\eta_{22}^\uparrow&-\gamma_{21}^\uparrow-\eta_{21}^\uparrow&-\gamma_{12}^\uparrow-\eta_{12}^\uparrow&\gamma_{11}^\uparrow+\eta_{11}^\uparrow&0\\
\gamma_{22}^\downarrow+\eta_{22}^\downarrow&-\xi^\downarrow_{11}-\xi^\uparrow_{22}&is_{21}-\frac{\xi^\downarrow_{12}-\xi^\uparrow_{21}}{2}&-is_{12}-\frac{\xi^\downarrow_{21}-\xi^\uparrow_{12}}{2}&0&\gamma^\uparrow_{11}+\eta^\uparrow_{11}\\
-\eta_{21}^\downarrow-\gamma_{21}^\downarrow&is_{12}-\frac{\xi^\downarrow_{21}-\xi^\uparrow_{12}}{2}
&-i(\omega_1'-\omega_2'	)-\frac{\xi_0^\downarrow+\xi_0^\uparrow}{2}&0&-is_{12}-\frac{\xi^\downarrow_{21}-\xi^\uparrow_{12}}{2}&\gamma^\uparrow_{12}+\eta^\uparrow_{12}\\
-\eta^\downarrow_{12}-\gamma^\downarrow_{12}&-is_{21}-\frac{\xi^\downarrow_{12}-\xi^\uparrow_{21}}{2}
&0&i(\omega_1'-\omega_2'	)-\frac{\xi_0^\downarrow+\xi_0^\uparrow}{2}&is_{21}-\frac{\xi^\downarrow_{12}-\xi^\uparrow_{21}}{2}&\gamma^\uparrow_{21}+\eta^\uparrow_{21}\\
\gamma^\downarrow_{11}+\eta^\downarrow_{11}&0&-is_{21}-\frac{\xi^\downarrow_{12}-\xi^\uparrow_{21}}{2}&is_{12}-\frac{\xi^\downarrow_{21}-\xi^\uparrow_{12}}{2}&-\xi^\downarrow_{22}-\xi^\uparrow_{11}&\gamma_{22}^\uparrow+\eta^\uparrow_{22}\\
0&\gamma^\downarrow_{11}+\eta^\downarrow_{11}&\gamma^\downarrow_{12}+\eta^\downarrow_{12}&\gamma^\downarrow_{21}+\eta^\downarrow_{21}
&\gamma_{22}^\downarrow+\eta^\downarrow_{22}&-\gamma_0^\uparrow-\eta_0^\uparrow
\end{pmatrix},
\end{equation}
\begin{equation}
\label{eqn:mu1}
\mathcal{L}_1=\begin{pmatrix}
-i\omega_2'-\xi^\downarrow_{11}-\frac{\xi^\downarrow_{22}+\xi^\uparrow_{22}}{2}&is_{21}-\frac{\xi^\downarrow_{12}-\xi^\uparrow_{21}}{2}&\eta^\uparrow_{21}-\gamma^\uparrow_{21}&\gamma^\uparrow_{11}-\eta^\uparrow_{11}\\
is_{12}-\frac{\xi^\downarrow_{21}-\xi^\uparrow_{12}}{2}&-i\omega_1'-\xi^\downarrow_{22}-\frac{\xi^\downarrow_{11}+\xi^\uparrow_{11}}{2}&\eta^\uparrow_{22}-\gamma^\uparrow_{22}&\gamma^\uparrow_{12}-\eta^\uparrow_{12}\\
\eta^\downarrow_{21}-\gamma^\downarrow_{21}&\eta^\downarrow_{22}-\gamma^\downarrow_{22}&-i\omega_1'-\xi^\uparrow_{22}-\frac{\xi^\downarrow_{11}+\xi^\uparrow_{11}}{2}&-is_{12}-\frac{\xi^\downarrow_{21}-\xi^\uparrow_{12}}{2}\\
\gamma^\downarrow_{11}-\eta^\downarrow_{11}&\gamma^\downarrow_{12}-\eta^\downarrow_{12}&-is_{21}-\frac{\xi^\downarrow_{12}-\xi^\uparrow_{21}}{2}&-i\omega_2'-\xi^\uparrow_{11}-\frac{\xi^\downarrow_{22}+\xi^\uparrow_{22}}{2}\\
\end{pmatrix},
\end{equation}
$\mathcal{L}_{-1}=\mathcal{L}_1^*$, 
$\mathcal{L}_2=-i(\omega_1'+\omega_2')-\xi^\downarrow_0-\xi^\uparrow_0$ and 
$\mathcal{L}_{-2}=\mathcal{L}_2^*$. When convenient, we have used the abbreviations $\omega_1'=2E_1+s^\downarrow_{11}-s^\uparrow_{11}$, 
$\omega_2'=2E_2+s^\downarrow_{22}-s^\uparrow_{22}$, $s_{ij}=s_{ij}^\downarrow-s_{ji}^\uparrow$, $\gamma_{0}^{\downarrow\uparrow}=\gamma_{11}^{\downarrow\uparrow}+\gamma_{22}^{\downarrow\uparrow}$, $\eta_{0}^{\downarrow\uparrow}=\eta_{11}^{\downarrow\uparrow}+\eta_{22}^{\downarrow\uparrow}$, $\xi_{ij}^{\downarrow\uparrow}=\gamma_{ij}^{\downarrow\uparrow}+\eta_{ij}^{\downarrow\uparrow}$ and $\xi_{0}^{\downarrow\uparrow}=\gamma_{0}^{\downarrow\uparrow}+\eta_{0}^{\downarrow\uparrow}$.
\end{widetext}

\begin{figure}[t!]
\centering
\includegraphics[scale=0.17]{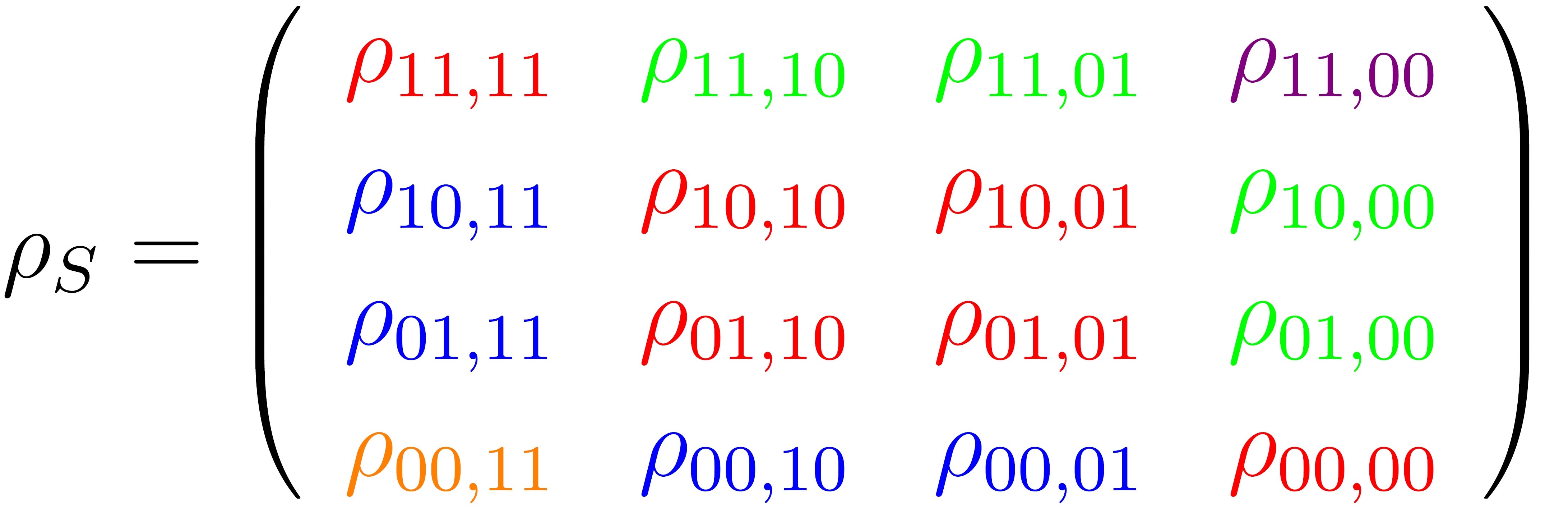}\\
\caption{(Color online) Density matrix of the state of the system with 
Hamiltonian equation~\eqref{eqn:HamFermions} in the fermionic interactions 
basis. The master equation driven by the Liouvillian $\mathcal{L}$ couples only 
elements of the density matrix with the same color. In particular, the block 
$\mathcal{L}_0$ is represented by the color red, $\mathcal{L}_1$ by the color 
green, $\mathcal{L}_{-1}$ by the color blue, $\mathcal{L}_{2}$ by the color 
violet and $\mathcal{L}_{-2}$ by the color orange.}
\label{fig:state}
\end{figure}

{Note that {assuming} a local master equation instead of 
Eq.~\eqref{eqn:masterEqFermions} would 
{lead to} extra-terms connecting, for instance, the block $\mathcal{L}_0$ 
with the blocks 
$\mathcal{L}_{\pm 2}$ \cite{Cattaneo2019}. Therefore, the block decomposition 
would not be valid in 
this case.}

A very similar structure was found for the Liouvillian of two uncoupled spins in 
a common bath \cite{Bellomo2017}, where the block separation was exploited to 
find the analytical eigenvalues describing the decay of the system. We thus 
understand the help brought by the symmetry in Eq.~\eqref{eqn:symmetry} to the 
present example: instead of having to find the eigenvalues and eigenvectors of a 
$16\times 16$ matrix, we restrict {ourselves} to the analysis of a $6\times 6$ 
and a $4\times 4$ matrix. Fig.~\ref{fig:state} depicts how the elements of the 
density matrix of the system written in the excitation basis appear in separate 
blocks of the master equation (each color representing an independent block). 

{Furthermore, if we are interested in finding the steady state of the evolution and the latter is unique, 
we just have to analyze the matrix $\mathcal{L}_0$. To be sure that the condition on the uniqueness holds, one has to check that no decoherence-free subspaces are present. Their appearance can be detected a priori using different conditions on the interaction Hamiltonian or on the master equation \cite{Lidar2003}, otherwise they can be revealed by the presence of more than one null eigenvalue in the spectrum of the Liouvillian superoperator. Since we have set non-zero, unbalanced temperatures of the baths, the steady state may contain coherences as well, 
but only the ones corresponding to the eigenvalue $0$ of $\mathcal{N}$, namely 
$\rho_{10,01}$ and $\rho_{01,10}$. The same steady-state coherences were found using a non-secular master equation in a couple of recent works \cite{PhysRevA.99.042320,Huangfu2018}. We will provide an example of the appearance of these coherences below and in the next example about harmonic oscillators.} 

Note that, if we had 
performed the full secular approximation instead of the partial one, we would 
have introduced a broader symmetry on the superoperator level, dividing the 
block $\mathcal{L}_0$ into two additional parts. Indeed, if the spectrum of 
$H_S$ is non-degenerate, the full secular approximation decouples coherences and 
populations \cite{BreuerPetruccione}. {The symmetry generated by $\mathcal{N}$ is therefore providing us with new ``selection rules'' that indicate the allowed transitions between elements of the density matrix: in the partial secular regime some of the coherences may exchange ``amplitude'' with the diagonal elements. We can visualize this through a concrete case of the two-coupled-qubits example: let us consider a scenario with $\omega_1=1$, $\omega_2=1$, $\lambda=0.01$, $T_1=\omega_1/k_B$, $T_2=\omega_1/10k_B$ and Ohmic spectral densities (from now on for simplicity we use dimensionless units for time and energy). Using these values, the fermionic energies read $2E_1=1.01005$ and $2E_2=0.99005$. $\mu$ denotes the strength of the qubit-bath coupling, and considering the weak-coupling limit we set $\mu=10^{-1.5}$. According to Eq.~\eqref{eqn:tauR}, $\tau_R\approx1000$ and therefore the partial secular approximation must conserve the terms in the master equation associated to the frequency difference $2E_1-2E_2=0.02$, which does not satisfy the condition in Eq.~\eqref{eqn:partialSecular}. Using the above values, we can calculate the coefficients of the master equation~\eqref{eqn:masterEqFermions} according to the discussion in Ref.~\cite{BreuerPetruccione} and we can compute the dynamics of the two qubits. According to the suitable conditions \cite{Lidar2003}, we have checked that no decoherence-free subspace is present in this scenario.

\begin{figure}[t]
\includegraphics[scale=.8]{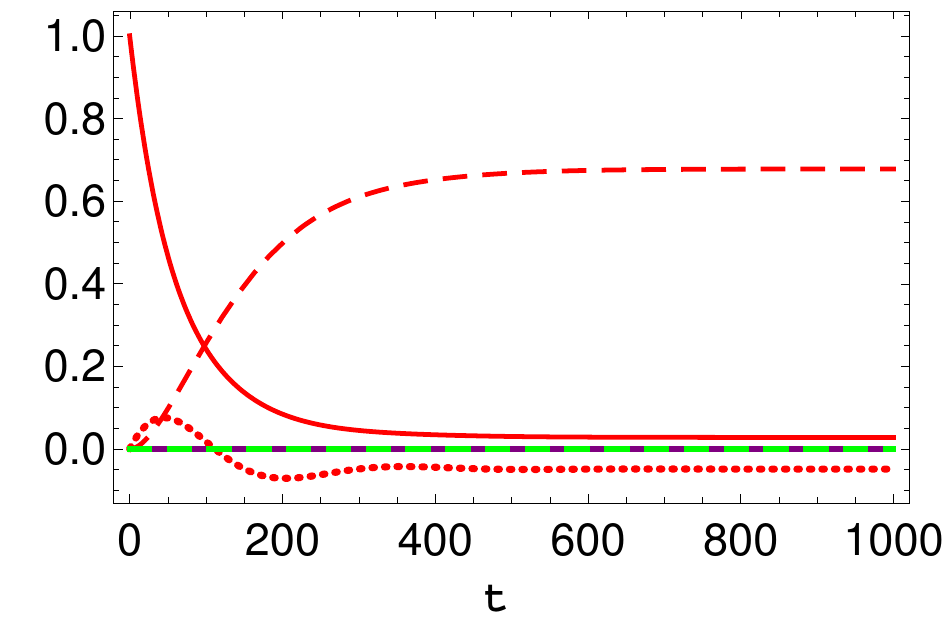}
\caption{(Color online) Mean value of different two-qubit observables as a function of time, when the evolution starts in the state $\ket{11}_f$ and we use the master equation~\eqref{eqn:masterEqFermions} in partial secular approximation. As defined in the main text, we plot $P_{11}(t)$ (solid red), $P_{00}(t)$ (dashed red), $C_{0}(t)$ (dotted red), $C_{1}(t)$ (dashed green), $C_{2}(t)$ (solid purple). $C_1$ and $C_2$ remain null during the evolution, while using the master equation in partial secular approximation $C_0$ varies and stabilizes to a non-zero value after the thermalization time, showing the appearance of steady-state coherences. On the contrary, the full secular approximation would keep $C_0$ null as well.}
\label{fig:ev}
\end{figure}

We now want to visualize how the partial secular approximation induces selection rules between different elements of the density matrix: we consider the evolution of the two qubits starting from the excited fermionic state $\ket{11}_f$, i.e. $\rho_S(0)=\ket{11}_f\!\bra{11}$, whose dynamics is driven by the block $\mathcal{L}_0$. We monitor the expectation value of some observables which pertain to different blocks of the Liouvillian superoperator as a function of time, in particular we choose: 
\begin{itemize}
\item $P_{11}(t)=\Tr[\rho_S(t) \ket{11}_f\!\bra{11}]=\rho_{11,11}(t)$,
\item $P_{00}(t)=\Tr[\rho_S(t) \ket{00}_f\!\bra{00}]=\rho_{00,00}(t)$,
\item $\begin{aligned}[t]
C_0(t)=&\Tr[\rho_S(t) (f_1^\dagger f_2+f_2^\dagger f_1)]\\
=&2\Re[\rho_{01,10}(t)],
\end{aligned}$
\item $\begin{aligned}[t]
C_1(t)=&\Tr[\rho_S(t) (f_1+f_1^\dagger)]\\
=&2\Re[\rho_{11,01}(t)]+2\Re[\rho_{10,00}(t)],
\end{aligned}$
\item $\begin{aligned}[t]
C_2(t)=&\Tr[\rho_S(t) (f_1^\dagger f_2^\dagger+f_2f_1)]\\
=&2\Re[\rho_{11,00}(t)].
\end{aligned}$
\end{itemize}
$P_{11}$, $P_{00}$ and $C_0$ depend on elements of the block $\mathcal{L}_0$, while $C_1$ of the block $\mathcal{L}_1$ and $C_2$ of the block $\mathcal{L}_2$. Figure~\ref{fig:ev} depicts their evolution: $P_{11}$ is the only non-zero mean value at time $0$. Therefore, the symmetry brought by the partial secular approximation $[\mathcal{N},\mathcal{L}]=0$ denies the possibility that $C_1$ and $C_2$ may change their value during the evolution, since their dynamics is driven by blocks different from $\mathcal{L}_0$. On the contrary, $C_0$ increases and stabilizes to a non-zero value at infinite time, since it is the mean value of the observable expressing the exchange of excitations between the fermions, whose dynamics is driven by $\mathcal{L}_0$.

Let us now briefly discuss how things would change if we applied the full secular approximation to derive the master equation~\eqref{eqn:masterEqFermions}, i.e. if we removed the terms with $i\neq j$. The full secular approximation decouples coherences and populations \cite{BreuerPetruccione}, therefore, in the scenario discussed before, not only it would inhibit any transition that may ``activate'' $C_1$ and $C_2$, but would also keep $C_0$ null, given that the latter depends on the density matrix element $\rho_{01,10}$. This means that, in Fig.~\ref{fig:ev}, the full secular approximation would make the dotted red line ($C_0$) overlap with the green and purple lines ($C_1$ and $C_2$), thus proving itself not suitable to treat the current scenario \cite{Cattaneo2019}.

}

\subsection{Bosons}
\label{sec:bosons}
If we consider a generic bosonic system for which Eq.~\eqref{eqn:symmetry} 
holds, we will still have a block diagonalization of the Liouvillian 
superoperator which will simplify the resolution of the master equation, but 
each block will have infinite dimension. Here, we want to focus on a simpler 
case in which the symmetry expressed by Eq.~\eqref{eqn:symmetry} leads to a 
dimensionality reduction as well: we restrict ourselves to the space of Gaussian 
states \cite{Ferraro2005} and we consider only a master equation conserving 
Gaussianity. Therefore, we {only need to analyze} the dynamics of the covariance 
matrix, neglecting any displacement which may be eliminated through a suitable 
transformation.

Let us consider a system of $M$ non-interacting bosons with Hamiltonian:
\begin{equation}
\label{eqn:HamBosons}
H_S=\sum_{k=1}^M E_k a_k^\dagger a_k.
\end{equation}
Given the presence of local or common baths leading to a Gaussian Markovian 
master equation, we want to study the dynamics of a Gaussian state with no 
displacement. For convenience, we choose to write the covariance matrix of the 
state using the creation and annihilation operators, i.e. a generic element of 
the covariance matrix may be written in one of these three forms:
\begin{equation}
\label{eqn:covMatrixDef}
\langle a_i^\dagger a_j^\dagger\rangle \quad\textnormal{  or  } \quad \langle 
a_i^\dagger a_j \rangle \quad \textnormal{ or }\quad\langle a_i a_j \rangle,
\end{equation}
where the average is performed on the chosen Gaussian state. We define as 
$\delta$ the difference between number of creations and number of 
annihilations in an element of the covariance matrix 
Eq.~\eqref{eqn:covMatrixDef}, assuming values 
$2,0,-2$, respectively.

It is easy to understand what the symmetry defined by Eq.~\eqref{eqn:symmetry} 
is telling us about the evolution of the covariance matrix: the dynamics of an 
element of the covariance matrix with value {$\delta$} can only be a function of 
elements of the covariance matrix with the same value {$\delta$}\footnote{{This property can be extended}
to non-Gaussian states, where the $n$-th moments 
must be taken into account: the master equation describing the evolution of the 
$n$-th moment $\langle\underbrace{ a_i^\dagger a_j^\dagger\ldots}_\text{l 
creations}\underbrace{\ldots a_r a_s}_\text{n-l annihilations}\rangle$ with 
{$\delta$}$=2l-n$ can only be a function of $m$-th moments with the same value 
of {$\delta$} (difference between number of creations and number of 
annihilations in the moment).}. We can collect the elements of the covariance 
matrix Eq.~\eqref{eqn:covMatrixDef} (which cannot be trivially obtained through 
commutations of the other elements) in a vector $\mathbf{x}$. The evolution of 
$\mathbf{x}$ 
as a function of time is then given by the formula
\begin{equation}
\label{eqn:evCovMatrix}
\frac{d\mathbf{x}}{dt}=B \mathbf{x}+\mathbf{b}.
\end{equation}
The matrix $B$ is block-diagonalized 
labelling each block with the value {$\delta$}: {$B=\bigoplus_{\delta=-2,0,2} 
B_\delta$}. Furthermore, $\langle a_i^\dagger a_j^\dagger \rangle=\langle a_i 
a_j\rangle^*$ and the symmetry assures us that these two moments do not couple 
in the master equation, therefore the block $B_{2}$ is trivially obtained by the 
block $B_{-2}$. We can now calculate the dimension of each block {$B_{\delta}$}:
\begin{equation}
\label{eqn:dimBr}
\begin{split}
dim(B_0)&=M^2, \\
dim(B_{\pm 2})&=\binom{M+2-1}{2}.
\end{split}
\end{equation}

\subsubsection{Two bosons in a common bath}
{As an example, we} consider the system of two displaced non-interacting bosons 
with Hamiltonian:
\begin{equation}
\label{eqn:displacedH}
H_S=\sum_{k=1,2} (\omega_k a_k^\dagger a_k -\alpha_k a_k-\alpha_k^* a_k^\dagger 
).
\end{equation}
The Hamiltonian can be recast in the standard form of 
Eq.~\eqref{eqn:HamiltonianNonIntModes} through a suitable displacement operator 
$D(\boldsymbol{\alpha})$ \cite{Ferraro2005}: $D(\boldsymbol{\alpha})^\dagger a_k 
D(\boldsymbol{\alpha})=a_k+\alpha_k$. Therefore we have:
\begin{equation}
\label{eqn:HamBosonsB}
H_S=\sum_{k=1,2} E_k a_k^\dagger a_k,
\end{equation}
which describes two non-interacting harmonic oscillators. We couple the system 
to a common bosonic environment $H_E=\sum_l \Omega_l c_l^\dagger c_l$ in a 
thermal state with temperature $T>0$. The {system-bath interaction} Hamiltonian is:
\begin{equation}
\label{eqn:intHamBosons}
H_I=\sum_l g_l(a_1+a_1^\dagger+a_2+a_2^\dagger)(c_l+c_l^\dagger),
\end{equation}
where $g_l$ determines the spectral density, which is not relevant for the 
present discussion. The evolution of the system coupled to the environment is 
given by the master equation with Liouvillian:
\begin{equation}
\label{eqn:LiouvillianBosons}
\begin{split}
\mathcal{L}^\dagger[O]=&i[H_S+H_{LS},O]\\ 
&+\sum_{ij=1,2}\gamma_{ij}^\downarrow \left(a_i^\dagger O a_j -\frac{1}{2}\{a_i^\dagger
a_j,O\}\right)\\
&+\sum_{ij=1,2}\gamma_{ij}^\uparrow \left(a_i O a_j^\dagger 
-\frac{1}{2}\{a_i a_j^\dagger,O\}\right),\\
\end{split}
\end{equation}
where $O$ is an operator acting on the Hilbert space of the system, and 
$\gamma_{ij}^\downarrow$ and $\gamma_{ij}^\uparrow$ are respectively the 
coefficients describing the decay and the absorption, which depend on the 
spectral density and on the temperature of the environment 
\cite{BreuerPetruccione}. The Lamb-shift Hamiltonian reads:
\begin{equation}
\label{eqn:lambShiftBosons}
H_{LS}=\sum_{ij=1,2} s_{ij} a_i^\dagger a_j .
\end{equation}
The elements $\gamma_{12}$ and $\gamma_{21}$ are different from zero only if the 
harmonic oscillators are slightly detuned {(or not detuned at all)} \cite{Cattaneo2019}. {We remind that,
assuming that the initial state is Gaussian, then it will remain Gaussian due to 
the form of Eq.~\eqref{eqn:LiouvillianBosons}.}

{The relevant elements of the covariance matrix can be collected in a vector
$\mathbf{x}$ of} dimension $10$. In particular, we choose to parametrize it 
according to the basis $\langle a_1^\dagger a_1^\dagger\rangle$, $\langle 
a_2^\dagger a_2^\dagger\rangle$, $\langle a_1^\dagger a_2^\dagger\rangle$ with 
{$\delta=2$}. $\langle a_1 a_1\rangle$, $\langle a_2 a_2\rangle$, $\langle a_1 
a_2 \rangle$ with {$\delta=-2$}. $\langle a_1^\dagger a_1\rangle$, $\langle 
a_2^\dagger a_2\rangle$, $\langle a_1^\dagger a_2 \rangle$, $\langle a_2^\dagger 
a_1 \rangle$ with {$\delta=0$}. The master equation describing the evolution of 
$\mathbf{x}$ has the form of Eq.~\eqref{eqn:evCovMatrix}. The vector 
$\mathbf{b}$ can be written as {$\mathbf{b}=\oplus_{\delta=-2,0,2} 
\mathbf{b}_\delta$}. We have that 
$\mathbf{b}_{\pm 2}=\boldsymbol{0}$, while
\begin{equation}
\label{eqn:vectorB}
\mathbf{b}_0=\begin{pmatrix}
\gamma^\uparrow_{11}\\
\gamma^\uparrow_{22}\\
\gamma^\uparrow_{12}\\
\gamma^\uparrow_{21}
\end{pmatrix}.
\end{equation}
\begin{widetext}
The matrix $B$ describes how the elements of the covariance matrix are coupled 
together in the master equation, and it is block-diagonal according to 
{$B=\bigoplus_{\delta=-2,0,2} B_\delta$}. The blocks are given by:
\begin{equation}
\label{eqn:block0}
B_0=\begin{pmatrix}
-\gamma^{b}_{1}&0&\frac{-2i 
s_{12}-\gamma^\downarrow_{12}+\gamma^\uparrow_{21}}{2}&\frac{2i 
s_{21}-\gamma^\downarrow_{21}+\gamma^\uparrow_{12}}{2}\\
0&-\gamma^{b}_{2}&\frac{2i 
s_{12}-\gamma^\downarrow_{12}+\gamma^\uparrow_{21}}{2}&\frac{-2i 
s_{21}-\gamma^\downarrow_{21}+\gamma^\uparrow_{12}}{2}\\
\frac{-2i s_{21}-\gamma^\downarrow_{21}+\gamma^\uparrow_{12}}{2}&\frac{2i 
s_{21}-\gamma^\downarrow_{21}+\gamma^\uparrow_{12}}{2}&i\Delta\omega-\frac{
\gamma^{b}_{1}+\gamma^{b}_{2}}{2}&0\\
\frac{2i s_{12}-\gamma^\downarrow_{12}+\gamma^\uparrow_{21}}{2}&\frac{-2i 
s_{12}-\gamma^\downarrow_{12}+\gamma^\uparrow_{21}}{2}&0&-i\Delta\omega-\frac{
\gamma^{b}_{1}+\gamma^{b}_{2}}{2}\\
\end{pmatrix},
\end{equation}
\begin{equation}
\label{eqn:block2}
B_{-2}=\begin{pmatrix}
-2i 
E'_1-\gamma^{b}_{1}&-2is_{12}-\gamma^\downarrow_{12}
+\gamma^\uparrow_{21}&0\\
-is_{21}-\frac{\gamma^\downarrow_{21}-\gamma^\uparrow_{12}}{2}&-i (E'_1+E'_2) 
-\frac{\gamma^{b}_{1}+\gamma^{b}_{2}}{2}&-is_{12}-\frac{\gamma^\downarrow_{12
}-\gamma^\uparrow_{21}}{2}\\
0&-2is_{21}-\gamma^\downarrow_{21}+\gamma^\uparrow_{12}&-2i 
E'_2-\gamma^{b}_{2}\\
\end{pmatrix},
\end{equation}
and $B_2=B_{-2}^*$. We have defined $E_1'=E_1+s_{11}$, $E_2'=E_2+s_{22}$, 
$\Delta\omega=E_1'-E_2'$, 
$\gamma_j^{b}=\gamma_{jj}^\downarrow-\gamma_{jj}^\uparrow$.
\end{widetext}
To find the steady state of the system we have to solve the equation 
$B\mathbf{x}_{ss}+\mathbf{b}=\boldsymbol{0}$. Therefore, the elements of the 
covariance matrix with {$\delta\neq 0$} vanish in the steady state. On the 
contrary, in the case in which the harmonic oscillators are slightly detuned and 
$\gamma_{12}\neq 0$, all the elements with {$\delta=0$} have a non-zero 
component for $t\rightarrow\infty$, and in particular $\langle a_1^\dagger 
a_2\rangle_{ss}$ and $\langle a_2^\dagger a_1\rangle_{ss}$ do not vanish, i.e. 
we observe steady-state coherences. The analytical form of the steady {state} 
can be obtained by solving the system of four differential equations given by 
{$B_0 \mathbf{x}_{ss}^{(\delta=0)}+\mathbf{b}_0$=0.}

\section{Discussion and conclusions}
\label{sec:conclusions}
In this paper we have shown how the Liouvillian superoperator $\mathcal{L}$ of a 
broad class of open quantum systems can be block-diagonalized through a symmetry 
on the superoperator level, namely the invariance under the action of the number 
superoperator $\mathcal{N}$, defined in Eq.~\eqref{eqn:superoperatorN}, such 
that $[\mathcal{N},\mathcal{L}]=0$. This symmetry arises when we derive the 
standard Bloch-Redfield master equation of the open system applying a suitable 
partial secular approximation whose condition is given by 
Eq.~\eqref{eqn:partialSecular}. 
{The requirements for the microscopic model are that the
system Hamiltonian can be recast as  $M$}
non-interacting bosonic or fermionic modes (Eq.~\eqref{eqn:HamiltonianNonIntModes})
and that the system operators  in the 
interaction Hamiltonian  satisfy the conditions discussed at the end of 
Sec.~\ref{sec:requirements}, which are usually fulfilled in the 
majority of physical systems of importance to quantum information or condensed 
matter physics. This includes, for instance, any system with Hamiltonian 
quadratic in the bosonic or fermionic operators, and coupled to a thermal bath 
through operators which are linear in the field operators, as well as some spin 
systems.

{The existence of the symmetry is formalized and proven in Proposition 1. 
Corollary 1 states that such symmetry implies the invariance under the action of 
the parity superoperator as well. Proposition 2 shows that we can exploit 
Proposition 1 to decompose the Liouvillian superoperator into blocks, and 
that in the fermionic case only $M+1$ of them are independent. This greatly reduces the complexity of 
the master equation. Furthermore, each block may be the only part of the 
Liouvillian we have to manipulate in order to find a certain physical quantity, 
for example Proposition 3 shows that{, when unique}, the steady state is 
determined only by one single block. This implies that the allowed steady-state 
coherences in the excitation basis of an {unique} steady state are only the 
ones with equal number of excitations in the ket and the bra, as formalized in 
Corollary 2.}

A couple of examples are also discussed. In Sec.~\ref{sec:fermions} we have 
found the dimension of each block $\mathcal{L}_d$ of the Liouvillian 
superoperator in the case of a system of $M$ fermionic modes 
(Eq.~\eqref{eqn:numFermions}), and we have shown how to apply this to a system of 
two coupled qubits. In this scenario, an originally $16\times 16$ Liouvillian is 
decomposed into five blocks of dimension $6$, $4$, $4$, $1$ and $1$. The 
information about the steady state is contained in the $6\times 6$ block only. 
The decomposition greatly simplifies the master equation and also allows to 
obtain some analytical solutions. Then, in Sec.~\ref{sec:bosons} we have 
discussed the case of bosons, focusing in particular on Gaussian states. We have 
shown how to decompose the equation for the evolution of the covariance matrix 
employing the symmetry of the number superoperator, and we have applied it to 
study the case of two harmonic 
oscillators in a common bath. In the presence of small detuning, we have 
detected steady-state coherences by focusing on a system of only 4 linear 
equations, instead of the original system of 10 equations.

{
These results may be relevant in disparate fields. For instance, reducing the complexity of the master equation describing transport in quantum systems is of great importance \cite{Dorn2019}, and our discussion may be especially relevant for materials which exhibit \textit{quasi-degeneracies} in the Hamiltonian spectrum and thus require a master equation in partial secular approximation \cite{Darau2009}. The latter is also important to study the heat current from two unbalanced reservoirs, since it solves any deficiency that a global master equation may display with respect to a local one \cite{Cattaneo2019}. The symmetry generated by $\mathcal{N}$ may be also relevant in the field of quantum metrology. Indeed, as discussed in Sec.~\ref{sec:symRes} it may be seen as a generalization of the concept of phase-covariant master equation, which plays a fundamental role in defining the limits for the frequency estimation of a single qubit \cite{Smirne2016,Haase2018,Haase2018c,Liuzzo-Scorpo2018}. Therefore, a protocol for the frequency estimation of multiple detuned qubits should rely on our result to distinguish between the possible noise models and their origin. Finally, Proposition 3 and Corollary 2 are very relevant in quantum thermodynamics and quantum thermalization. Indeed, they define a strict law on the possible steady-state coherences that may appear in the steady state of a Markovian process. The relation between coherences and diagonal elements is also important to improve the performance of quantum thermal machines \cite{Vinjanampathy2016}. 
}

{Possible extensions of this work could address other situations
where the number superoperator symmetry 
can arise.}
{In particular, beyond stationary 
environments considered here,} non-stationary autocorrelation functions 
of the bath would add a temporal dependence to the coefficients of the 
Lamb-shift Hamiltonian and {of the} dissipator in Eqs.~\eqref{eqn:lambShift} 
and~\eqref{eqn:dissipator}. This would affect the way in which we perform 
the partial secular approximation. As a consequence, there may exist scenarios 
in which the symmetry is broken. Consider for instance a single-mode 
electromagnetic field in a squeezed bath \cite{BreuerPetruccione}: the master 
equation would contain terms of the form $a\rho_S a$, where $a$ is the 
annihilation operator of the field. Clearly, in this case 
$[\mathcal{N},\mathcal{L}]\neq 0$. Note however that we would still recover the 
symmetry of the parity superoperator: $[\mathcal{P},\mathcal{L}]
=0$. Different scenarios may arise considering different states of the 
environment, and further investigation is needed to extent our work to these 
cases. 

{A further direction could be exploring non-linear scenarios
beyond quadratic system Hamiltonians, even if the latter}
include many bosonic and fermionic systems of interest to quantum information. 
{In particular, for systems of many coupled spins {where the number of spin excitations is not conserved}, even if diagonalization
through the Jordan-Wigner transformations is possible, the resulting } fermionic 
Hamiltonian generally depends on a collective phase, violating the 
``non-interacting'' condition. In these scenarios, the validity of the symmetry 
$[\mathcal{N},\mathcal{L}]=0$ must be checked case by case, using the physical 
considerations discussed in Sec.~\ref{sec:block}. On the contrary, the block 
decomposition holds {for any excitation-preserving system of spins} in common or 
separate thermal baths {(with a final requirement on the 
interaction Hamiltonian).} 

{The extension of Proposition 3 to 
scenarios with more than one steady state, e.g. in the presence of decoherence 
free subspaces or oscillating coherences, would  also be interesting. In particular, some }open questions 
not addressed here are: does the block $\mathcal{L}_0$ contain all the 
information about any steady state of the system? If not, are there particular 
cases in which this holds? Can we find an analogous theorem for oscillating 
coherences? Investigation about the same symmetry for non-Markovian master 
equations in the weak coupling limit may be interesting as well. In particular,  we expect to find the same results for the case of a time-local non-Markovian master equation in the \textit{secular regime} \cite{Haikka2010}, while non-secular terms would break the symmetry. {Finally, it 
would be useful to employ our findings to implement a fast, manageable code to 
solve the dynamics of the open system by exploiting its symmetry, as already 
done for the case of identical atoms \cite{PhysRevA.98.063815}.}

\begin{acknowledgments}
MC acknowledges 
interesting discussions with Bassano Vacchini, Bruno Bellomo, Giacomo 
Guarnieri and Gerhard Dorn, and thanks Salvatore Lorenzo for useful hints.
The authors acknowledge funding from MINECO/AEI/FEDER through projects EPheQuCS
FIS2016-78010-P, the María de Maeztu Program
for Units of Excellence in R$\&$D (MDM-2017-0711), and the
CAIB postdoctoral program, support from CSIC Research Platform PTI-001 and 
 partial funding of MC from Fondazione Angelo della Riccia.

\end{acknowledgments}

\appendix
\section{The formalism of GKLS master equations}
\subsection{From the Bloch-Redfield to the Lindblad equation}
\label{sec:blochLin}
The fact that, in general, the Bloch-Redfield master equation does not preserve 
positivity and is not in the GKLS form (or Lindblad form) 
\cite{Gorini1976a,Lindblad1976} is a very well-known issue \cite{Benatti2005}. 
The standard procedure to derive a Markovian master equation makes use of the 
full secular approximation \cite{BreuerPetruccione,Rivas2012}, i.e. removes all 
the terms with $\omega\neq\omega'$ in Eqs.~\eqref{eqn:lambShift} 
and~\eqref{eqn:dissipator}, in order to recover the semigroup structure of a 
master equation in the Lindblad form. This, however, may lead to major mistakes 
when the condition in Eq.~\eqref{eqn:partialSecular} is not fulfilled 
\cite{Cattaneo2019}. Nonetheless, some recent studies have shown that the PSA 
performed through a suitable coarse-graining does lead to a GKLS master equation 
\cite{Cresser2017a,Farina2019}{, as can be also found in a previous work which 
did not mention the PSA \cite{PhysRevA.78.022106}}. This method of applying the 
PSA is analogous to the one used in 
the present paper, based on the condition in Eq.~\eqref{eqn:partialSecular}, up 
to a negligible error. A related discussion is provided in 
Ref.~\cite{mccauley2019completely}. As a matter of fact, the Bloch-Redfield 
master equation does follow the dynamics of a GKLS master equation up to an 
error due to the approximation of the dynamics of the microscopic model to a 
Markovian evolution. A significant deviation of the Bloch-Redfield master 
equation from the Lindblad form must be considered as a signature of the failure 
of the Born-Markov approximations to describe the physical model, and not 
viceversa, as proven in Ref.~\cite{Hartmann2020}.

{For the reasons explained above, we are allowed to assume that the master equation in PSA with Liouvillian Eq.~\eqref{eqn:Liouvillian} can be rewritten in the GKLS or Lindblad form as in Eq.~\eqref{eqn:LindbladForm}. The Lindblad operators become linear combinations of the jump operators $\hat{A}_\alpha$, and can be obtained for each specific case by diagonalizing the matrix $\gamma_{\alpha\beta}(\omega,\omega')$ in Eq.~\eqref{eqn:dissipator} \cite{Cresser2017a,Farina2019}.} Analogously, we can write the master equation in the Lindblad form in the Heisenberg picture \cite{BreuerPetruccione}:
\begin{equation}
\label{eqn:LindbladFormHeisenberg}
\begin{split}
\frac{d}{dt}\hat{J}(t)=&i[\hat{H'},\hat{J}(t)]\\
&+\sum_{l=1}^{N^2-1} \hat{F}_l^\dagger \hat{J}(t) \hat{F}_l-\frac{1}{2}\{ 
\hat{F}_l^\dagger \hat{F}_l, J(t)\},
\end{split}
\end{equation}
where $\hat{J}=\hat{J}^\dagger$ is an observable living in $\mathsf{L}$, whose 
expectation value can be found as $\langle 
\hat{J}(t)\rangle=\Tr[\hat{J}(t)\rho_S]=\Tr[\hat{J}\rho_S(t)]$.

\subsection{Working with superoperators}
\label{sec:superoperatorsSpace}
It is very convenient to extend the bra-ket notation to the Liouville space 
$\mathsf{L}$ \cite{Albert2014,Bellomo2017}. Suppose that $\{\ket{e_j}\}_{j=1}^N$ 
is a basis of the Hilbert space $\mathsf{H}_S$. Then, any operator $\hat{O}$ (or 
equivalently density matrix) in $\mathsf{L}$ can be written as:
\begin{equation}
\label{eqn:state}
\hat{O}=\sum_{j,k=1}^NO_{jk}\ket{e_j}\bra{e_k}.
\end{equation}  
We now perform the following isomorphism, passing from a description of 
$\hat{O}$ as an operator acting on $\mathsf{H}_S$ to a description as a 
$N^2$-dimensional vector:
\begin{equation}
\label{eqn:isomorphism}
\hat{O}\rightarrow \vert O\rangle\!\rangle =\sum_{j,k=1}^N 
O_{jk}\ket{e_j}\otimes\ket{e_k}.
\end{equation}
Given $\hat{O},\hat{R}\in\mathsf{L}$, the reader can verify that this 
$N^2-$dimensional space is furnished with the Hilbert-Schmidt scalar product $ 
\langle\!\langle O\vert R\rangle\!\rangle=\Tr(\hat{O}^\dagger \hat{R})$, and 
that the following properties hold:
\begin{equation}
\label{eqn:propLiouv}
\vert O R \rangle\!\rangle=\hat{O}\otimes\mathbb{I}_N\vert 
R\rangle\!\rangle,\qquad \vert R O \rangle\!\rangle=\mathbb{I}_N\otimes 
\hat{O}^T\vert R\rangle\!\rangle,
\end{equation}
where $\mathbb{I}_N$ is the $N\times N$ identity matrix.

Using Eq.~\eqref{eqn:propLiouv}, we can now write the explicit form of the 
Liouvillian superoperator starting from the Bloch-Redfield master equation in 
Eqs.~\eqref{eqn:masterEqLiouvillian},~\eqref{eqn:lambShift} 
and~\eqref{eqn:dissipator}:
\begin{equation}
\label{eqn:LiouvillianBraket}
\begin{split}
\mathcal{L}=&-i\left( 
(\hat{H}_S+\hat{H}_{LS})\otimes\mathbb{I}_N-\mathbb{I}_N\otimes 
(\hat{H}_S+\hat{H}_{LS})^T\right)\\
&+\sum_{\alpha,\beta}\sum_{(\omega,\omega')\in\mathsf{PSA}} 
\gamma_{\alpha\beta}(\omega,\omega')\Big(\hat{A}_\beta(\omega)\otimes 
\hat{A}_\alpha^*(\omega')\Big.\\
&\Big.-\frac{1}{2}\big(\hat{A}_\alpha^\dagger(\omega')\hat{A}
_\beta(\omega)\otimes\mathbb{I}_N+\mathbb{I}_N\otimes 
(\hat{A}_\alpha^\dagger(\omega')\hat{A}_\beta(\omega))^T\big)\Big).
\end{split}
\end{equation}

{
\section{Proof of Proposition 1}
\label{sec:proofT1}
We want to proof that $[\mathcal{N},\mathcal{L}]=0$. For convenience, we work using the isomorphism ``flattening'' matrices into vectors (see 
Appendix~\ref{sec:superoperatorsSpace}, Eq.~\eqref{eqn:isomorphism}), so that $\mathcal{L}$ can be written as in Eq.~\eqref{eqn:LiouvillianBraket} and $\mathcal{N}$ as in Eq.~\eqref{eqn:superoperatorN}. 
Looking at the structure of $\mathcal{L}$, we recognize four different parts of the Liouvillian that must commute with $\mathcal{N}$; in particular, to proof the statement it is sufficient to verify the following four assertions:
\begin{enumerate}
\item [(1)]$[\mathcal{N},\hat{H}_S\otimes\mathbb{I}_N]=[\mathcal{N},\mathbb{I}_N\otimes \hat{H}_S^T]=0$.
\item [(2)]$[\mathcal{N},\hat{A}_\beta(\omega)\otimes 
\hat{A}_\alpha^*(\omega')]=0$ for all $\alpha,\beta$ and $(\omega,\omega')\in\mathsf{PSA}$.
\item [(3)]$[\mathcal{N},\hat{A}_\alpha^\dagger(\omega')\hat{A}
_\beta(\omega)\otimes\mathbb{I}_N]=0$ for all $\alpha,\beta$ and $(\omega,\omega')\in\mathsf{PSA}$.
\item [(4)]$[\mathcal{N},\mathbb{I}_N\otimes 
(\hat{A}_\alpha^\dagger(\omega')\hat{A}_\beta(\omega))^T]=0$ for all $\alpha,\beta$ and $(\omega,\omega')\in\mathsf{PSA}$.
\end{enumerate} 
This means that the value of the coefficients of the master equation does not play a role in the appearance of the symmetry.

Assertion (1) is easily proven: the system Hamiltonian 
{Eq.~\eqref{eqn:HamiltonianNonIntModes}} cannot change the number of 
particles in any mode, since $[\hat{H}_S,\hat{n}_k]=0\;\forall\,k$, and thus 
$[\hat{H}_S,\hat{N}]=0$, therefore $[\mathcal{N},\hat{H}_S\otimes\mathbb{I}_N]=0$.

Given the Hamiltonian of a system of $M$ modes, $\hat{H}_S=\sum_{k=1}^M E_k \hat{n}_k$, we write an eigenvector as $\ket{e}$, with $\hat{H}_S\ket{e}=e\ket{e}$ and $e=\sum_{k=1}^M E_k n_k^e$, where $n_k^e$ is the number of excitation in each mode of $\ket{e}=\ket{n_1^e,\ldots,n_M^e}$. The total number of particles in $\ket{e}$ is given by $n^e=\sum_{k=1}^M n_k^e$. Using the same notation for generic eigenvectors $\ket{e'},\ket{\epsilon},\ket{\epsilon'}$, we write the jump operators as $\hat{A}_\beta(\omega)=\sum_{e'-e=\omega} \ket{e}\mel{e}{\hat{A}_\beta}{e'}\bra{e'}$, $\hat{A}_\alpha(\omega')=\sum_{\epsilon'-\epsilon=\omega'} \ket{\epsilon}\mel{\epsilon}{\hat{A}_\alpha}{\epsilon'}\bra{\epsilon'}$. Let us now consider Assertion (2): we write the commutator as
\begin{equation}
\label{eqn:assertion2}
\begin{split}
&[\mathcal{N},\hat{A}_\beta(\omega)\otimes 
\hat{A}_\alpha^*(\omega')]\\
=&\hat{N}\hat{A}_\beta(\omega)\otimes\hat{A}_\alpha^*(\omega')-\hat{A}_\beta(\omega)\hat{N}\otimes\hat{A}_\alpha^*(\omega')\\
&+\hat{A}_\beta(\omega)\otimes\hat{A}_\alpha^*(\omega')\hat{N}^T-\hat{A}_\beta(\omega)\otimes\hat{N}^T\hat{A}_\alpha^*(\omega')\\
=&\sum_{e'-e=\omega} (n^e-n^{e'})\ket{e}\mel{e}{\hat{A}_\beta}{e'}\bra{e'}\otimes\hat{A}_\alpha^*(\omega')\\
&-\hat{A}_\beta(\omega)\otimes\sum_{\epsilon'-\epsilon=\omega'} (n^{\epsilon}-n^{\epsilon'})\ket{\epsilon}\mel{\epsilon}{\hat{A}_\alpha^*}{\epsilon'}\bra{\epsilon'}.
\end{split}
\end{equation}
Since $(\omega,\omega')\in\mathsf{PSA}$, according to Eq.~\eqref{eqn:partialSecular} we must have $\omega-\omega'=\mathcal{O}_{t^*}(\mu^2)$, therefore $\sum_{k=1}^M E_k (n_k^{e'}-n_k^e-n_k^{\epsilon'}+n_k^{\epsilon})=\mathcal{O}_{t^*}(\mu^2)$ or $\sum_{k=1}^M E_k (n_k^{e'}+n_k^{\epsilon})=\sum_{k=1}^M E_k (n_k^e+n_k^{\epsilon'})+\mathcal{O}_{t^*}(\mu^2)$. But, assuming \textit{Condition II} on the interaction Hamiltonian defined at the end of Sec.~\ref{sec:requirements} (which comprises \textit{Condition I} as well), the last line means that $\sum_{k=1}^M (n_k^{e'}+n_k^{\epsilon})=\sum_{k=1}^M (n_k^e+n_k^{\epsilon'})$, that is to say, $(n^e-n^{e'})=(n^{\epsilon}-n^{\epsilon'})$, for any couple of $\ket{e},\ket{e'}$ or $\ket{\epsilon},\ket{\epsilon'}$ in Eq.~\eqref{eqn:assertion2}. Therefore, the commutator in Eq.~\eqref{eqn:assertion2} vanishes and we have proven Assertion (2). This proof shows us how we can relax \textit{Condition II} on the interaction Hamiltonian: it is sufficient to assume it only on the energies which enter in the expression of each possible $(\omega,\omega')\in\mathsf{PSA}$. For instance, suppose we 
have a system of two modes and all the $\hat{A}_\alpha$ are second-degree polynomials in 
the creation and annihilation operators of the modes. Then, we just need to 
require that $2 E_1\neq E_2+ \mathcal{O}_{t^*}(\mu^2)$ or viceversa, in order to eliminate all 
the ``unbalanced'' terms through the partial secular approximation.

Next, we consider Assertion (3):
\begin{equation}
\label{eqn:assertion3}
\begin{split}
&[\mathcal{N},\hat{A}_\alpha^\dagger(\omega')\hat{A}
_\beta(\omega)\otimes\mathbb{I}_N]\\
=&\hat{N}\hat{A}_\alpha^\dagger(\omega')\hat{A}
_\beta(\omega)\otimes\mathbb{I}_N-\hat{A}_\alpha^\dagger(\omega')\hat{A}
_\beta(\omega)\hat{N}\otimes\mathbb{I}_N\\
=&\sum_{\substack{\epsilon-\epsilon'=\omega'\\
                  e-\epsilon'=\omega}} (n^{\epsilon}-n^{e})\ket{\epsilon}\mel{\epsilon}{\hat{A}_\alpha^\dagger}{\epsilon'}\mel{\epsilon'}{\hat{A}_\beta}{e}\bra{e}\otimes\mathbb{I}_N.
\end{split}
\end{equation}
Applying \textit{Condition II} on the energy difference $\omega-\omega'$ as for Assertion (2), we find that $n^\epsilon=n^e$ and the commutator in Eq.~\eqref{eqn:assertion3} vanishes, proving Assertion (3). Assertion (4) is verified analogously, and we have proven Proposition 1.
}

{\section{Jordan-Wigner transformations}
\label{sec:JordanWigner}
In this appendix we briefly present the well-known Jordan-Wigner 
technique \cite{Lieb1961,coleman2015introduction} to represent spins as fermions, and we show how to employ it to recast the Hamiltonian of an excitation-preserving spin chain in a quadratic fermionic Hamiltonian. Given a system of $M$ spins, we apply the following Jordan-Wigner transformations to write each spin operator as a function of fermionic operators:
\begin{equation}
\begin{aligned}
\label{eqn:JordanWignerGen}
&\sigma_k^z=c_k^\dagger c_k-\frac{1}{2},\\
&\sigma_k^+=c_k^\dagger \,e^{i\pi\sum_{l<k} n_l},\\
&\sigma_k^-=c_k \, e^{-i\pi\sum_{l<k} n_l}.
\end{aligned}
\end{equation}
The reader can verify the anticommutation rules 
$\{c_j,c_k^\dagger\}=\delta_{jk}$, $\{c_j,c_k\}=0$.

Let us now suppose that the spins are interacting in an excitation-preserving chain, with Hamiltonian:
\begin{equation}
\label{eqn:HamChain}
H_{SC}=\sum_{k=1}^M \frac{\omega_k}{2}\sigma_k^z+\sum_{k=1}^{M-1} J_k (\sigma_{k+1}^+\sigma_k^-+h.c.).
\end{equation}
Using the Jordan-Wigner transformations in Eq.~\eqref{eqn:JordanWignerGen}, we can express it as:
\begin{equation}
\label{eqn:HamChainJW}
\begin{split}
H_{SC}=&\sum_{k=1}^M \frac{\omega_k}{2}\left(c_k^\dagger c_k-\frac{1}{2}\right)\\
&+\sum_{k=1}^{M-1} J_k (c_{k+1}^\dagger e^{i\pi n_k} c_k^-+h.c.),
\end{split}
\end{equation}
where we have used $[e^{i\pi\sum_{l<k} n_l},c_k]=0$ for $k\geq l$. Noticing that $e^{i\pi n_k}c_k\ket{0}_k=0$ and $e^{i\pi n_k}c_k\ket{1}_k=c_k\ket{1}_k$, we observe that the phase $e^{i\pi n_k}$ has no effects and can be removed from the Hamiltonian. Therefore, Eq.~\eqref{eqn:HamChainJW} is quadratic in the fermionic operators and can be recast in the form of Eq.~\eqref{eqn:HamiltonianNonIntModes}, thus being suitable for the analysis in Sec.~\ref{sec:block}.
}

\section{Two coupled spins as free fermions}
\label{sec:coupledSpinsFree}
{In this appendix we show how to employ the Jordan-Wigner transformations to write a system of two coupled spins as non-interacting fermions (part of the discussion was already addressed in 
Ref.~\cite{Giorgi2016}). For convenience, we rewrite the Jordan-Wigner transformations Eq.~\eqref{eqn:JordanWignerGen} for two spins as:
\begin{equation}
\begin{aligned}
\label{eqn:JordanWignerSub1}
&\sigma_1^z=1-2 c_1^\dagger c_1,\quad& \sigma_2^z&=1-2 c_2^\dagger c_2,\\
&\sigma_1^x=c_1^\dagger+c_1,\quad &\sigma_2^x&=(1-2 c_1^\dagger c_1)(c_2^\dagger+c_2),
\end{aligned}
\end{equation}
where $c_1$ and $c_2$ are fermionic operators. The free Hamiltonian 
of the coupled qubits, given in Eq.~\eqref{eqn:HamiltonianSpins}, is now transformed following Eq.~\eqref{eqn:JordanWignerSub1}:
\begin{equation}
\label{eqn:freeHamJW}
\begin{split}
H_S=&\frac{\omega_1}{2}(1-2 c_1^\dagger c_1)+\frac{\omega_2}{2}(1-2 c_2^\dagger 
c_2)\\
&+\lambda(c_1^\dagger-c_1)(c_2^\dagger+c_2).
\end{split}
\end{equation}}

In order to diagonalize $H_S$ written in terms of fermionic operators, we first 
perform the Bogoliubov transformation
\begin{equation}
\label{eqn:Bogoliubov}
\begin{aligned}
&c_1=\cos\theta \,\xi_1+\sin\theta \,\xi_2^\dagger, \\
& c_2=\cos\theta \,\xi_2-\sin\theta\, \xi_1^\dagger,
\end{aligned}
\end{equation}
and then the rotation
\begin{equation}
\label{eqn:rotation}
\begin{aligned}
&\xi_1=\cos\phi\,f_1^\dagger+\sin\phi\, f_2^\dagger,\\
&\xi_2=\cos\phi\,f_2^\dagger-\sin\phi \,f_1^\dagger.
\end{aligned}
\end{equation}
\begin{widetext}
Let us now write Eq.~\ref{eqn:freeHamJW} after having applied the Bogoliubov 
transformation:
\begin{equation}
\label{eqn:passage1}
\begin{split}
H_S=&+\frac{\omega_1}{2}\Big[1-2\Big(\cos^2\theta\, 
\xi_1^\dagger\xi_1+\sin^2\theta \,\xi_2\xi_2^\dagger+\sin\theta\cos\theta 
(\xi_2\xi_1+h.c.)\big)\Big]\\
&+\frac{\omega_2}{2}\Big[1-2\big(\cos^2\theta\, \xi_2^\dagger\xi_2+\sin^2\theta 
\,\xi_1\xi_1^\dagger+\sin\theta\cos\theta (\xi_2\xi_1+h.c.)\big)\Big]\\
&+\lambda \Big[2\cos\theta\sin\theta 
(\xi_1\xi_1^\dagger+\xi_2\xi_2^\dagger)-2\cos\theta\sin\theta+(\cos^2\theta-\sin
^2\theta)(\xi_2\xi_1+h.c.)+(\xi_1^\dagger\xi_2+h.c.) \Big].
\end{split}
\end{equation}
We set $\theta$ so as to delete all the double-excitation terms in 
Eq.~\ref{eqn:passage1}:
\begin{equation}
\label{eqn:setTheta}
\begin{split}
&-\omega_+\sin\theta\cos\theta+\lambda(\cos^2\theta-\sin^2\theta)=0\qquad\Rightarrow \qquad\tan{2\theta}=\frac{2\lambda}{\omega_+}
\end{split}
\end{equation}
with $\omega_+=\omega_1+\omega_2$.

Using the condition in Eq.~\ref{eqn:setTheta}, we now write $H_S$ after having 
performed the rotation:
\begin{equation}
\label{eqn:passage2}
\begin{split}
H_S=&+\frac{\omega_1}{2}\left[
1-2\left((\sin^2\theta\sin^2\phi-\cos^2\theta\cos^2\phi) f_1^\dagger 
f_1\right.\right.+(\sin^2\theta\cos^2\phi-\cos^2\theta\sin^2\phi)f_2^\dagger 
f_2\\
&\qquad\quad\left.\left.+\sin\phi\cos\phi(f_1 
f_2^\dagger+h.c.)+\cos^2\theta\right)\right]\\
&+\frac{\omega_2}{2}\left[
1-2\left((\sin^2\theta\sin^2\phi-\cos^2\theta\cos^2\phi) f_2^\dagger 
f_2\right.\right.+(\sin^2\theta\cos^2\phi-\cos^2\theta\sin^2\phi) f_1^\dagger 
f_1\\
&\qquad\quad\left.\left.-\sin\phi\cos\phi(f_1 
f_2^\dagger+h.c.)+\cos^2\theta\right)\right]\\
&+\lambda\left[\sin{2\theta}(f_1^\dagger f_1+f_2^\dagger 
f_2)+\sin{2\phi}(f_1^\dagger f_1-f_2^\dagger f_2)\right.
+(\cos^2\phi-\sin^2\phi)(f_1 f_2^\dagger+h.c.)\left.-\sin{2\theta}\right].
\end{split}
\end{equation}
In order to eliminate the remaining cross terms, we set the value of $\phi$:
\begin{equation}
\label{eqn:setPhi}
\begin{split}
&-\omega_-\sin\phi\cos\phi+\lambda(\cos^2\phi-\sin^2\phi)=0\quad\Rightarrow\quad
{\tan{2\phi}=\frac{2\lambda}{\omega_-}}
\end{split}
\end{equation}
with $\omega_-=\omega_1-\omega_2$.

By employing the relations 
$\cos^2\alpha\cos^2\beta-\sin^2\alpha\sin^2\beta=(\cos{2\alpha}+\cos{2\beta})/2$ 
and 
$\cos^2\alpha\sin^2\beta-\sin^2\alpha\cos^2\beta=(\cos{2\alpha}-\cos{2\beta})/2$
, we finally obtain the Hamiltonian
\begin{equation}
\label{eqn:HSfinalJW}
H_S=E_1\left(2 f_1^\dagger f_1-1\right)+E_2\left(2 f_2^\dagger f_2-1\right),
\end{equation}
where 
\begin{equation}
\label{eqn:E1E2}
\begin{split}
&E_1=\frac{\sqrt{\lambda^2+\omega_+^2/4}+\sqrt{\lambda^2+\omega_-^2/4}}{2}, 
\qquad \quad 
E_2=\frac{\sqrt{\lambda^2+\omega_+^2/4}-\sqrt{\lambda^2+\omega_-^2/4}}{2}.
\end{split}
\end{equation}

We can now proceed to write the spin operators in terms of the fermionic 
operators. Let us start with $\sigma_1^x$:
\begin{equation}
\label{eqn:sigma1xJW}
\sigma_1^x=\cos(\theta+\phi)\left(f_1^\dagger+f_1\right)+\sin(\theta+\phi)\left(
f_2^\dagger+f_2\right).
\end{equation}
By noticing that $(1-2c_1^\dagger c_1)(1-2c_2^\dagger 
c_2)(c_2-c_2^\dagger)=\sigma_2^x$, we can readily obtain
\begin{equation}
\label{eqn:sigma2xJW}
\sigma_2^x=\cos(\theta-\phi)P\left(f_2^\dagger-f_2\right)+\sin(\theta-\phi)P\left(f_1^\dagger-f_1\right),
\end{equation}
where 
\begin{equation}
\label{eqn:parity}
P=(1-2c_1^\dagger c_1)(1-2c_2^\dagger c_2)=(2f_1^\dagger f_1-1)(2f_2^\dagger 
f_2-1)
\end{equation}
is the parity operator, which tells us whether the number of excitations in the 
system is even or odd. The Hamiltonian $H_S$ conserves the parity of the 
excitation number, i.e. $[H_S,P]=0$, thus we are sure that ${P}$ has the form 
presented in Eq.~\ref{eqn:parity}.

The form of the operators $\sigma_1^z$ and $\sigma_2^z$ is more involved, since 
they inevitably contain the ``double emission'' and ``double absorption'' terms 
$f_1 f_2$ and $f_1^\dagger f_2^\dagger$, which we could find in the coupling of 
the original Hamiltonian Eq.~\ref{eqn:HamiltonianSpins} 
$\lambda\sigma_1^x\sigma_2^x$. In some particular scenarios, it is possible to 
perform a rotating wave approximation on such direct interaction, and to write 
it as $\lambda(\sigma_1^+\sigma_2^-+\sigma_1^-\sigma_2^+)$, which does not add 
excitations into the system. In this case, diagonalizing the system Hamiltonian 
is easier and can be done by just a single rotation \cite{Rivas2010a}. 
Anyway, with the aim at a more complete description, we keep the 
counter-rotating terms in the Hamiltonian and we write the operators as:
\begin{equation}
\label{eqn:sigmaZjw}
\begin{aligned}
\sigma_1^z=&(\cos{2\theta}+\cos{2\phi})f_1^\dagger 
f_1+(\cos{2\theta}-\cos{2\phi})f_2^\dagger 
f_2-\cos{2\theta}-2\left[
\cos\phi\sin\phi(f_1f_2^\dagger+h.c.)+\cos\theta\sin\theta(f_1f_2+h.c.)\right]
.\\
\sigma_2^z=&(\cos{2\theta}+\cos{2\phi})f_2^\dagger 
f_2+(\cos{2\theta}-\cos{2\phi})f_1^\dagger 
f_1-\cos{2\theta}-2\left[\cos\phi\sin\phi(f_1^\dagger 
f_2+h.c.)+\cos\theta\sin\theta(f_1f_2+h.c.)\right].
\end{aligned}
\end{equation}
\end{widetext}
Finally, we find the new basis that diagonalizes $H_S$ as a function of the 
canonical basis $\{\ket{11},\ket{10},\ket{01},\ket{00}\}$, which corresponds 
respectively to both spins up, first spin up and second down, etc... To 
represent the excitation basis of each couple of fermionic operators, we employ 
a subscript indicating to which operator we are referring, while we do not use 
subscripts for the canonical basis; for instance from 
Eq.~\ref{eqn:JordanWignerSub1} we understand that:
\begin{equation}
\label{eqn:canVsC}
\begin{split}
&\ket{00}_c=\ket{11},\qquad \ket{01}_c=\ket{10},\\
& \ket{10}_c=\ket{01},\qquad \ket{11}_c=\ket{00}.
\end{split}
\end{equation}

From Eq.~\ref{eqn:rotation}, we see that the vacuum state of $f_1,f_2$ is the 
fully-excited state of $\xi_1,\xi_2$, i.e. $\ket{00}_f=\ket{11}_\xi$. In order 
to find $\ket{00}_f$, i.e. the ground state of $H_S$, we thus impose that 
$\xi_1^\dagger$ and $\xi_2^\dagger$ applied on a linear combination of the 
states in Eq.~\ref{eqn:canVsC} read $0$. For instance,
\begin{equation}
\begin{split}
&\xi_1^\dagger\sum_{\alpha,\beta=0,1} a_{\alpha\beta}\ket{\alpha\beta}_c=0\\
&\Rightarrow \cos\theta\, a_{00}=-\sin\theta\,a_{11},\qquad a_{01}=0.
\end{split}
\end{equation}
Finally we have:
\begin{equation}
\label{eqn:groundState}
\ket{00}_f=+\sin\theta\ket{11}-\cos\theta\ket{00}.
\end{equation}
The remaining states are obtained by applying $f_1^\dagger$ and $f_2^\dagger$ on 
the ground state, and they read:
\begin{equation}
\label{eqn:remainingStates}
\begin{aligned}
&\ket{01}_f=-\sin\phi\ket{10}+\cos\phi\ket{01},\\
&\ket{10}_f=-\cos\phi\ket{10}-\sin\phi\ket{01},\\
&\ket{11}_f=+\cos\theta\ket{11}+\sin\theta\ket{00}.
\end{aligned}
\end{equation}

\bibliography{blockStructure}

\end{document}